\newcommand*\norm[1]{ \left\| #1 \right\| }
\newcommand*\Z{ \mathbb{Z} }
\newcommand*\R{ \mathbb{R} }
\newcommand*\C{ \mathbb{C} }
\newcommand*\N{ \mathbb{N} }
\newcommand*\T{\mathbb T }
\newcommand\set[1]{\left\{ #1 \right\}}
\newcommand\EQ[1]{\begin{equation} #1 \end{equation}}
\theoremstyle{definition}
\newtheorem{Remark}{Remark}
\theoremstyle{theorem}
\newtheorem{mythm}{Theorem}[section]
\newtheorem{mylemma}{Lemma}[section]
\newtheorem{mycor}{Corollary}[section]
\title[Pointwise modulus of continuity of the Lyapunov exponent]{Pointwise modulus of continuity of the {L}yapunov exponent and integrated density of states for analytic multi-frequency quasiperiodic $M(2,\C)$ cocycles}
\author{Matthew Powell}
\address{Department of Mathematics, Georgia Institute of Technology, Atlanta GA, 30332}
\date{\today}
\begin{document}
\maketitle

\begin{abstract}
It is known that the Lyapunov exponent for multifrequency analytic cocycles is weak-H\"older continuous in cocycle for certain Diophantine frequencies, and that this implies certain regularity of the integrated density of states in energy for Jacobi operators. In this paper, we establish the pointwise modulus of continuity in both cocycle and frequency and obtain analogous regularity of the integrated density of states in energy, potential, and frequency.
\end{abstract}

\section{Introduction}

In this paper, we are interested in the regularity of the Lyapunov exponent associated to multifrequency quasi-periodic cocycles. Let $M(2,\C)$ denote the set of $2 \times 2$ matrices with complex entries. Let $\T^d = \R^d/ \Z^d$ denote the $d$-dimensional torus and let $S_\omega: \T^d \to \T^d$ denote the shift by $\omega \in \T^d:$ $x \mapsto x + \omega.$ A $d$-dimensional quasi-periodic cocycle is a pair $(A, \omega) \in C(\T^d, M(2,\C)) \times \R$ understood as a linear skew product
$(A,\omega): \C^2 \times \T^d \to \C^2\times \T^d$ with
$$(w,x) \mapsto (A(x)w, S_\omega x).$$
Cocycles enjoy the property that they may be iterated, in the following sense: the $N^{th}$ iterate of $(A,\omega)$ is
$$A_N(x,\omega) = \prod_{j = N-1}^0 A(S^j_\omega x).$$

We are interested in {\it analytic quasiperiodic cocycles}, so we assume $A$ is an analytic $M(2,\C)$-valued function on $\T^d.$ It is known that continuity of the Lyapunov exponent fails, in general, when $A$ is only $C^\infty$ (c.f. \cite{MR2825743} or \cite{WangYou}). Since we assume that $A$ is analytic on $\T^d,$ we may extend $A$ to some complex strip, $|\Im(z_j)| < \rho, j \leq d.$ We denote the space of such $A$ by $C_\rho(\T^d, M(2,\C)),$ and imbue this with the natural metric
$$\norm{A}_\rho := \sup_{|\Im(z_j)| < \rho/2} \norm{A(z_1,...,z_d)}.$$
\begin{Remark}
Note that we take the supremum over $|\Im(z_j)| < \rho/2$ rather than over $|\Im(z_j)| < \rho$ to ensure that the supremum exists. We may do this because $A$ is also an analytic cocycle over the closed set $\set{z\in \C^d: |\Im(z_j)| \leq \rho/2}.$
\end{Remark}
From this, we can inductively define a topology on the space of {\it all} analytic cocycles, but since convergence in this topology is equivalent to convergence in $(C_\rho, \norm{\cdot}_\rho)$ for some $\rho >0,$ we will restrict our attention to $C_\rho$ for a fixed $\rho.$

Cocycles of this form have been used extensively to study one-dimensional discrete Jacobi operators: $$H_{x,\omega} : \ell^2(\Z) \to \ell^2(\Z)$$ given by
\EQ{\label{eq:JacobiOp}
(H_{x,\omega}\psi)(n) = \overline{a(S^{n-1}_\omega x)}\psi(n - 1) + a_(S^{n}_\omega x) \psi(n + 1) + v(S^n_\omega x)\psi(n),}
where $a, v\in C(\T^d, \R).$ Solutions to the eigenequation $H_{x,\omega} \psi = E\psi$ may be recovered via the $N^{th}$ transfer matrix
$$A_N(x,\omega,E) = \prod_{j = N - 1}^0 M_j(x,\omega,E),$$
where
$$M_j(x,\omega,E) = \begin{pmatrix} E - v(S^{j+1}_\omega x) & -\overline{a(S^{j}_\omega x)} \\ a(S^{j+ 1}_\omega x) & 0\end{pmatrix}.$$
Indeed, any solution to $H_{x,\omega} \psi = E\psi$ satisfies
$$\begin{pmatrix} \psi(N +1)\\ \psi(N) \end{pmatrix} = A_N(x,\omega,E) \begin{pmatrix} \psi(1) \\ \psi(0)\end{pmatrix}.$$
Moreover, the transfer matrices is a classic example of an iterate of a quasiperiodic cocycle, by taking $A(x,\omega) = M_1(x,\omega,E).$

Now observe that any $M(2,\C)$ cocycle $A_N(x)$ for which $\det(A(x))$ is not identically zero can be renormalized to form an $SL(2,\C)$ cocycle (see e.g.\cite{JitomirskayaKosloverSchulteis}), however the resulting cocycle will lose pointwise boundedness if $\det(A(x))$ has zeros. This is precisely the nature of difficulty when extending $SL(2,\C)$ results to the $M(2,\C)$ case.

The (upper) Lyapunov exponent is defined as
\begin{equation}
L'(A,\omega) = \frac 1 N \int_{\T^d} \ln \norm{A_N(x,\omega)}dx.
\end{equation}
Note that, while $L'(A, \omega)$ need not be non-negative, the related object
\begin{equation}
L(A,\omega) = \lim_{N\to\infty} \int_{\T^d} L_N(\tilde A,\omega,x) dx,
\end{equation}
is, where $\tilde A \in SL(2,\C)$ is a renormalization of $A:$
\begin{equation}
\tilde A = \frac 1 {|\det A|^{1/2}} A.
\end{equation}
Moreover, $L_N$ and $L'_N$ are related by the following relation:
\begin{equation}
L_N(A,\omega) = L'_N(A,\omega) - \frac{1}{2} \int_{\T^d} \ln|\det(A(x))| dx.
\end{equation}
It follows that, when $\ln|\det(A(x))| \in L^1,$ both $L$ and $L'$ share the same regularity properties. 

It is often easier to deal with $L'$ when proving general boundedness and finite-scale continuity (see Sections \ref{Section:FirstSection} and \ref{Section:FiniteScaleCont}), but it is easier to deal with a non-negative quantity when proving and using our induction scheme (see Sections \ref{Section:Firstbasestep} and \ref{Section:Cont}). Thus both $L$ and $L'$ play a role in this paper.

Returning to \eqref{eq:JacobiOp}, an object related to the Lyapunov exponent for such operators is the integrated density of states (IDS), which maybe generally defined as in \cite{BourgainKlein}. Let $E_1 < E_2$ and define
$$k(x,\omega, E_1, E_2) = \limsup_{N \to \infty} \frac{1}{2N + 1} \set{\text{eigenvalues of} R_{[-N, N]} H_{x,\omega} R_{[-N,N]} \text{ in } [E_1, E_2]},$$
where $R_{[-N, N]}$ denotes the projection onto the interval $[-N, N].$ In our setting, this $\limsup$ is constant for Lebesgue a.e. $x.$ 
Clearly one may ask about the regularity of this object in the various parameters, $V, E_1, E_2, \omega.$ We obtain such a statement as a consequence of our main theorem (see Corollary \ref{Cor:IDS}).


The continuity of the Lyapunov exponent (in both cocycle and frequency) in this setting has been studied extensively in \cite{PowellContinuity}, where the author adapted an argument of Bourgain originally used to study $SL(2,\C)$ cocycles to study singular cocycles. 

The (pointwise) modulus of continuity of the Lyapunov exponent for one-frequency quasi-periodic cocycles has been studied by many authors.
While the Lyapunov exponent is known to be continuous in a very general settings, the (pointwise) modulus of continuity is a more delicate matter. This question was first studied in \cite{GoldsteinSchlagAnnals} for one-frequency (the underlying torus is $\T^1$) Schr\"odinger operators with fixed strongly diophantine frequency $\omega.$ There, authors also obtained analogous modulus of continuity for the IDS in the same setting. A key component of this proof was an a priori positivity assumption: $L(E) > 0.$ Shortly afterwards, the Lyapunov exponent for Schr\"odinger cocycles was shown to be continuous in both energy (for all frequencies) and frequency (at irrational frequencies) without positivity or Diophantine assumptions \cite{BourgainJitomirskayaContinuity}. This work was unable to obtain a modulus of continuity, however. It turns out that any modulus of continuity better than $\log$-H\"older requires both positivity and some arithmetic (perhaps weak) assumption \cite{BourgainBook}. 

Many authors have worked to extend various results from \cite{GoldsteinSchlagAnnals} with some success. The modulus of continuity of the Lyapunov exponent in the one-frequency setting has been extended to singular cocycles and fixed weaker diophantine frequencies \cite{MR3177775}; the multifrequency case has proved more delicate, and the known results still require some restrictive condition on the frequency \cite{DuarteKleinBook}. The modulus for both the Lyapunov exponent and IDS has also been obtained and/or sharpened in a variety of settings closely related to the Schr\"odinger case \cite{MR3413976, MR1774640, MR3936094, MR3910418, MR3262622, MR4097552, MR4426315, MR2481750, MR4068814, MR3963928, MR4102993, MR3227161}


We prove the following.

\begin{mythm}\label{Thm:MainThm}
Let $A(x) \in C_\rho$ be an analytic quasi-periodic $M(2,\C)$-cocycle on $\T^d$ with a plurisubharmonic extension to the strip $|\Im(z_j)| < \rho.$ Suppose that $\omega \in \T^d$ is such that, for some $\sigma \geq 1,$ $\norm{k\cdot \omega} > \tau |k|^{-\sigma} > 0$ for all $k\in\Z^d$ with $0 < |k|,$ and suppose that $L(A,\omega) > 0.$ Then there is $\delta = \delta(A)$ and $\gamma = \gamma(d,\sigma) \leq 1,$ such that, for every $A' \in C_\rho$ with $\norm{A - A'}_\rho + \norm{\omega - \omega'} < \delta,$
\begin{equation}\label{eq:WeakHolder}
|L(A,\omega) - L(A',\omega')| < C(A) \exp\set{-c(A)\left(-\ln\left(\norm{A - A'}_\rho + \norm{\omega - \omega'}\right)\right)^{\gamma}}.
\end{equation}
\end{mythm}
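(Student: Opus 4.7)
The plan is to adapt the Goldstein--Schlag inductive scheme, built on the interplay between the Avalanche Principle and a Large Deviation Theorem (LDT), to the present multifrequency $M(2,\C)$ setting. The first ingredient, developed in the earlier sections of the paper, is a neighborhood-uniform multifrequency LDT: using plurisubharmonicity of $\ln\norm{A'_N(\cdot,\omega')}$ on the strip $|\IM(z_j)| < \rho$, Fourier expansion on $\T^d$, and the Diophantine bound $\norm{k\cdot\omega} > \tau|k|^{-\sigma}$, one produces exponents $\tau_1, \tau_2 = \tau_{1,2}(d,\sigma) > 0$ such that, for $(A',\omega')$ in a small neighborhood of $(A,\omega)$ and all sufficiently large $N$,
$$\left|\set{x \in \T^d : \left|\tfrac{1}{N}\ln\norm{A'_N(x,\omega')} - L'_N(A',\omega')\right| > N^{-\tau_1}}\right| < \exp(-N^{\tau_2}). \qquad (\ast)$$
Working with $L'$ keeps the integrand pointwise bounded above, which is what makes this tractable in the singular $M(2,\C)$ case; the translation between $L$ and $L'$ is lossless whenever $\ln|\det A| \in L^1$, by the identity noted in the introduction.

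The second ingredient is the Avalanche Principle for the renormalized $SL(2,\C)$ cocycle $\tilde A$. Writing $N = nM$ and $B_j(x) = \tilde A_M(S_\omega^{(j-1)M}x,\omega)$, one requires $\ln\norm{B_j} \geq \tfrac{1}{2}M\, L(A,\omega)$ and $\norm{B_{j+1}B_j} \geq \tfrac{1}{2}\norm{B_{j+1}}\norm{B_j}$, both of which are verified off a set of measure $<n\exp(-M^{\tau_2})$ by $(\ast)$ at scale $M$. The AP then yields pointwise
$$\left|\ln\norm{A'_N(x,\omega')} - \sum_{j=1}^{n-1}\ln\norm{B_{j+1}B_j} + \sum_{j=2}^{n-1}\ln\norm{B_j}\right| \leq Cn\exp(-cML).$$
Integrating in $x$ and reapplying $(\ast)$ at scales $M$ and $2M$ to the $\ln\norm{B_j}$ and $\ln\norm{B_{j+1}B_j}$ terms produces the telescoping identity
$$\left|L_N(A',\omega') - 2L_{2M}(A',\omega') + L_M(A',\omega')\right| < C(A)\exp(-c(A)M^{\tau_2}),$$
uniformly in the perturbation neighborhood for suitable $M \ll N$. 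Iterating this identity at geometrically increasing scales and summing the errors upgrades it to the rate of convergence
$$|L(A',\omega') - L_N(A',\omega')| < C(A)\exp(-c(A)N^{\gamma_0})$$
for some $\gamma_0 = \gamma_0(d,\sigma) > 0$; positivity $L(A',\omega') > 0$ throughout the neighborhood is inherited from $L(A,\omega) > 0$ by applying the rate at a single sufficiently large finite scale and then again to pass to the limit.

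To extract the pointwise modulus of continuity, combine the rate with a crude finite-scale Lipschitz bound of the form $|L_N(A,\omega) - L_N(A',\omega')| \leq \exp(C(A)N)(\norm{A-A'}_\rho + \norm{\omega-\omega'})$, obtained directly from the product structure of $A_N$ and the plurisubharmonic majorization of $\ln\norm{A_N}$ on the strip (this is the content of the finite-scale continuity section). Setting $\epsilon := \norm{A-A'}_\rho + \norm{\omega-\omega'}$ and choosing $N$ of order $(-\ln\epsilon)/(2C(A))$ balances the finite-scale term against the two tail terms $|L(A,\omega) - L_N(A,\omega)|$ and $|L(A',\omega') - L_N(A',\omega')|$, each then bounded by $C'(A)\exp(-c'(A)(-\ln\epsilon)^{\gamma})$ for $\gamma = \gamma_0$ up to an absolute constant factored into $c'$; the triangle inequality delivers \eqref{eq:WeakHolder}. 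The principal obstacle is carrying out the AP bootstrap uniformly in the perturbation parameter within the $M(2,\C)$ category: $\tilde A$ loses pointwise boundedness where $\det A$ vanishes, so one must show that the AP expansion and nondegeneracy hypotheses hold on a set of nearly full measure in spite of this singularity, and that both hold uniformly on the neighborhood of $(A,\omega)$. This is precisely what the technical machinery of the preceding sections is built to guarantee, and combining it with the above three-step outline closes the argument.
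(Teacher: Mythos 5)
Your overall architecture (LDT $+$ Avalanche Principle $\Rightarrow$ telescoping identity $\Rightarrow$ uniform rate of convergence, then balance against finite-scale continuity by choosing $N$ as a function of $\epsilon$) is exactly the Goldstein--Schlag scheme the paper follows. But there is a genuine gap in how you handle the frequency perturbation. You assert that the LDT $(\ast)$, and hence the rate $|L(A',\omega') - L_N(A',\omega')| < C\exp(-cN^{\gamma_0})$, holds ``for $(A',\omega')$ in a small neighborhood of $(A,\omega)$ and all sufficiently large $N$.'' This cannot be right: a neighborhood of $\omega$ contains rational and Liouville frequencies, for which no Diophantine input is available at large scales, so the LDT and the AP bootstrap can only be run up to a scale determined by $\norm{\omega-\omega'}$ (a nearby $\omega'$ inherits $\norm{k\cdot\omega'} > \tfrac{\tau}{2}|k|^{-\sigma}$ only for $|k|$ up to roughly $\norm{\omega-\omega'}^{-1/(1+\sigma)}$). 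Your scheme therefore controls $L_{N'}(A',\omega')$ only for $N'$ below a finite threshold and says nothing about the remaining gap $|L_{N'}(A',\omega') - L(A',\omega')|$. Closing that gap is the crux of the theorem (and the stated improvement over earlier work, where $\omega'$ was also required to be Diophantine): the paper does it by pushing the AP identity out to the doubly exponential scale $N' \leq \exp(\exp(C_\rho N^\eta))$ (Corollary \ref{Cor:BaseStep}) and then invoking a separate quantitative convergence estimate (Lemma \ref{Lem:FinalLemma}) that needs only a finite-scale Diophantine condition. Your proposal contains no substitute for this step.

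A secondary inaccuracy: the finite-scale bound $|L_N(A,\omega)-L_N(A',\omega')| \leq e^{C(A)N}\left(\norm{A-A'}_\rho + \norm{\omega-\omega'}\right)$ is false for singular $M(2,\C)$ cocycles, since $\ln\norm{A_N(x)}$ is unbounded below near the zero set of $\det A_N$ and small perturbations move it by arbitrarily large amounts there. The correct statement (Theorem \ref{Thm:FiniteCont}) carries an additional additive error $C(A)e^{-2N^{\alpha-1}b(A)}$ coming from the nearly-singular set (controlled via the uniform Lojasiewicz inequality) and a multiplicative factor $e^{2N^\alpha}$ with $\alpha>1$; consequently one must take $N \sim (-\ln\epsilon)^\beta$ with $\beta\alpha < 1$ rather than $N$ linear in $-\ln\epsilon$, and this is where the final exponent $\gamma = \beta\eta$ actually comes from. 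This second issue is repairable within your outline; the first is not without the additional lemma.
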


We would like to make a few remarks at this point.

\begin{Remark}
Generally, a pointwise modulus of continuity of the form \eqref{eq:WeakHolder} is called pointwise $\gamma$-weak-H\"older continuity.
\end{Remark}

\begin{Remark}
The exponent $\gamma$ above may be computed explicitly in terms of the Diophantine parameter $\sigma$ and an exponent from a large deviation estimate, which depends only on the dimension, $d,$ of the torus $\T^d.$ In particular, improvement of the large deviation estimate would lead to improvement of the exponent $\gamma,$ which has been remarked in both \cite{GoldsteinSchlagAnnals} and \cite{DuarteKleinBook}.
\end{Remark}

\begin{Remark}
Notice that, while $\omega$ must satisfy $\omega \in \T^d$ is such that $\norm{k\cdot \omega} > \tau |k|^{-\sigma} > 0$ for all $k\in\Z^d$ with $0 < |k|,$ the frequency $\omega'$ need not. This is a key improvement on \cite{MR4102993}, where $\omega'$ had to also satisfy the same condition, and is a consequence of our argument from \cite{PowellContinuity}.
\end{Remark}

In the remainder of this paper, when there can be no ambiguity, we will write $\norm{A}$ in place of $\norm{A(x)} = \norm{A(x)}_\rho.$

As a corollary, we obtain analogous continuity for the IDS.

\begin{mycor}\label{Cor:IDS}
Let $H_{x,\alpha}$ be as in \eqref{eq:JacobiOp}, with $V \in C^\omega(\T^d, \R)$ a not identically singular analytic function on $\T^d.$  Suppose that $\alpha \in \T^d$ is such that $\norm{k\cdot \alpha} > \tau |k|^{-\sigma} > 0$ for all $k\in\Z^d$ with $0 < |k|,$ and suppose $L(E) > 0.$ Let $k(x, \alpha, E_1, E_2)$ be defined as above. Then for a.e. $x,$ $k(\alpha, E_1, E_2)$ obeys
\begin{align*}
|k(\alpha, E_1, E_2)| 
< \exp\set{-\left(-\ln\left(|E_1 - E_2|\right)\right)^{\gamma}}.
\end{align*}



\end{mycor}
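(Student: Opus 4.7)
The plan is to derive the IDS regularity by combining Theorem \ref{Thm:MainThm} (specialized to energy variation of the transfer matrix cocycle) with the Thouless formula and a standard Hilbert-transform argument for transferring moduli of continuity from the Lyapunov exponent to the IDS.

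First, I would observe that for the Jacobi cocycle $M_1(x,\omega,E)$, the dependence on $E$ is affine, hence Lipschitz in $E$ uniformly on the strip of analyticity. Therefore, for any two energies $E_1,E_2$ with $|E_1-E_2|$ small, viewing $A = M_1(\cdot,\omega,E_1)$ and $A' = M_1(\cdot,\omega,E_2)$ as two cocycles in $C_\rho$, Theorem \ref{Thm:MainThm} applies directly and yields
\[
|L(E_1) - L(E_2)| < C \exp\set{-c\left(-\ln|E_1-E_2|\right)^\gamma},
\]
for $|E_1-E_2|$ smaller than a threshold depending on $V, \tau, \sigma, \rho$.

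Next, I would recall the Thouless formula $L(E) = \int_\R \ln|E-E'|\, dN(E')$, where $N$ is the IDS, together with the identification $k(\alpha, E_1, E_2) = N(E_2) - N(E_1)$ for a.e.\ $x$ (since the shift is ergodic under our Diophantine hypothesis). The Thouless formula says $L$ is, up to a harmonic extension, the logarithmic potential of $dN$; equivalently, $N$ is (up to constants) the Hilbert transform of $L$ along the real axis. This is the key structural input for transferring regularity from $L$ to $N$.

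From here, the standard Goldstein--Schlag transfer (as used in \cite{GoldsteinSchlagAnnals} and extended in several of the cited works) shows that if $L$ satisfies a weak-H\"older modulus of continuity $h(t) = C\exp\{-c(-\ln t)^\gamma\}$ on an interval, then $N$ satisfies a modulus of the same form, with slightly worse constants that can be absorbed into the exponential by shrinking the threshold on $|E_1-E_2|$. One averages the Thouless identity over a small scale $\eta$, controls the Hilbert transform of $L$ on that scale by its modulus of continuity, and optimizes in $\eta$; the resulting estimate on $|N(E_1)-N(E_2)|$ has exactly the form in the corollary.

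The main obstacle is the final transfer step: the Hilbert transform is singular, so one must carefully split the integral into a near-diagonal part (estimated using the modulus of continuity of $L$) and a far part (estimated using boundedness of $L$). This is routine but technical, and is the step where one loses constants; fortunately the weak-H\"older form $\exp\{-c(-\ln t)^\gamma\}$ is stable under such losses, so the stated modulus for $k(\alpha,E_1,E_2)$ follows once $|E_1-E_2|$ is sufficiently small.
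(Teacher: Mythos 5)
Your proposal follows the second of the two routes the paper itself names (Theorem \ref{Thm:MainThm} applied to the affine energy-dependence of the transfer-matrix cocycle, then the Goldstein--Schlag transfer via the Thouless formula and its Hilbert-transform structure), which is exactly the argument the paper defers to the cited references for. The steps you outline are the standard ones and are correct, so this matches the paper's intended proof.
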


This may be seen either as a consequence of Theorem \ref{Thm:LDT} via an argument of Bourgain \cite{BourgainIDS} (and extended first by Schlag \cite{MR1860759} and later by Liu \cite{WencaiDisc} for any Schr\"odinger operator with a large deviation estimate), or as a consequence of Theorem \ref{Thm:MainThm} by an argument of Goldstein-Schlag using the Thouless formula, which generalizes to multifrequency Jacobi operators. As these arguments are standard and not new, we refer readers to those works for details.

The rest of this paper is organized as follows. In Section \ref{Section:FirstSection}, we recall a few well-known results for subharmonic functions and the finite-scale Lyapunov exponents, which we use throughout. Then we prove a modulus of continuity for the finite-scale Lyapunov exponent in Section \ref{Section:FiniteScaleCont}. In Section \ref{Section:Firstbasestep}, we provide an inductive procedure to obtain a locally uniform rate of convergence for the Lyapunov exponent when the frequency is fixed. In Section \ref{Section:Cont}, we combine the uniform rate of convergence with modulus of continuity for the finite scale Lyapunov exponents to obtain corresponding modulus of continuity for the Lyapunov exponent in both frequency and cocycle.

\section{Preliminaries: uniform bounds on Lyapunov exponents, large deviations, and Avalanche Principle}\label{Section:FirstSection}

Throughout this paper, we use $C$ and $C(\cdot)$ to denote large constants which depend on uniform measurements of the cocycle and dimension, $c$ and $c(\cdot)$ to denote small constants which depend on uniform measurements of the cocycle and dimension, and $C_\rho$ to denote constants which depend on the parameter $\rho.$ Unless otherwise stated, these constants may change by multiplicative constants throughout their appearance in a proof, but will remain finite, non-zero, and uniform in their respective parameters. 

This section is devoted to a few essential preliminary results which will be used in later sections. They may be found in a variety of other papers, and are provided here without proof except where the proof introduces ideas useful in the study of singular cocycles. 

We begin with a uniform version of the Lojasiewicz inequality, which is used throughout and may be of independent interest to readers.

\begin{mylemma}[\cite{DuarteKleinBook} Lemma 6.1]\label{Lem:UnifLoj} Let $f(x) \in C_\rho(\T^d,\C)$ be such that $f(x)$ is not identically zero. Then there are constants $\delta = \delta(f) > 0, S = S(f) < \infty,$ and $b = b(f) > 0$ such that if $g(x) \in C_\rho(\T^d,\C)$ with $\norm{g - f}_\rho < \delta,$ then 
\begin{equation}
\left|\set{x \in \T^d: |g(x)| < t}\right| < S t^b
\end{equation}
for all $t > 0.$
\end{mylemma}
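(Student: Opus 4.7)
The plan is to first establish the inequality for $f$ itself using the complex-analytic structure, and then observe that all constants depend only on data robust to small $C_\rho$-perturbations of $f$, so the same constants work uniformly for every nearby $g$ after possibly shrinking $\delta$.

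\emph{Step 1 (reference point).} Since $f$ is real-analytic and not identically zero on $\T^d$, its zero set has empty interior, so I would fix $x_0 \in \T^d$ with $|f(x_0)| = \alpha > 0$. Setting $\delta := \alpha/2$, any $g \in C_\rho(\T^d,\C)$ with $\norm{g - f}_\rho < \delta$ will satisfy $|g(x_0)| \geq \alpha/2$ and $\norm{g}_\rho \leq \norm{f}_\rho + \delta =: M$. All constants produced below will depend only on $\alpha, M, \rho, d$, so this $\delta$ will suffice.

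\emph{Step 2 (one-variable Cartan estimate).} I would next prove the following single-variable statement: if $h$ is holomorphic on $D(0,\rho/2) \subset \C$ with $|h(0)| \geq \alpha/2$ and $\sup_{D(0,\rho/2)} |h| \leq M$, then by Jensen's formula the number $N$ of zeros of $h$ in $D(0,\rho/4)$ is bounded by $C \log(2M/\alpha)$. Factoring these out in a Blaschke-type product $h = B\cdot u$ with $|u|$ bounded below on $D(0,\rho/8)$, the sublevel set of $|h|$ is controlled by that of the polynomial factor $|B|$, giving
\EQ{\bigl|\set{\zeta \in (-\rho/8,\rho/8): |h(\zeta)| < t}\bigr| \leq S_0\, t^{1/N}}
for all $t > 0$, with $S_0 = S_0(M,\alpha,\rho)$.

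\emph{Step 3 (lift to $d$ variables).} Finally, I would iterate Step 2 along coordinate directions, starting from the reference point $x_0$. Slice $g$ along $e_1$ through $x_0$ and use Step 2 to estimate the $x_1$-marginal of the bad set $\set{|g|<t}$. On the complement, a positive-measure set of slices $x_1 \mapsto g(x_1, x_2, \dots, x_d)$ still admits a pointwise lower bound $\alpha_1 > 0$ at some location; for such slices, repeat along $e_2$, and so on. After $d$ stages and Fubini,
\EQ{\bigl|\set{x \in \T^d : |g(x)| < t}\bigr| \leq S\, t^b}
for all $t > 0$, with $S = S(M,\alpha,\rho,d)$ and $b = b(M,\alpha,d) > 0$, uniform over $\norm{g - f}_\rho < \delta$.

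The hard part will be the bookkeeping in Step 3: at each iterated slicing stage one must quantify how the auxiliary lower bound $\alpha_j$ is allowed to degrade while still permitting the next slice to satisfy the hypotheses of Step 2. I would handle this by fixing a $t$-dependent threshold (e.g.\ $t^{1/(2d)}$) for declaring a fiber ``bad", discarding the small-measure set of such fibers via Step 2, and continuing the slicing on the complement. The $d$-fold iteration worsens the exponent by a factor depending on $d$, but the resulting $b > 0$ and $S < \infty$ depend only on $(M,\alpha,\rho,d)$ — exactly the data controlled by the condition $\norm{g - f}_\rho < \delta$, which is what yields the uniformity asserted in the lemma.
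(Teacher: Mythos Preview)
The paper does not prove this lemma; it is quoted from \cite{DuarteKleinBook} as a known result and stated without argument. So there is no in-paper proof to compare against, and your proposal has to stand on its own.

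Steps 1 and 2 are fine and standard. The genuine gap is in Step 3, exactly at the spot you flag as ``the hard part,'' and your proposed fix does not close it. The Jensen bound in Step 2 gives a zero count $N \lesssim \log(2M/\alpha)$ that depends on the reference lower bound $\alpha$. At the first slicing stage $\alpha_0 = \alpha/2$ is fixed, so $N_0$ is a fixed integer and the Cartan estimate yields a genuine power $t^{1/N_0}$. But at the second stage your reference value has degraded to $\alpha_1 \sim t^{1/(2d)}$, so Jensen now gives $N_1 \lesssim \log(1/t)$, and the Cartan bound on that slice is of order $t^{c/\log(1/t)} = e^{-c}$ --- a constant independent of $t$. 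After $d \geq 2$ iterations the output is at best a bound like $C(\log(1/t))^{-k}$, not $St^b$. No choice of $t$-dependent thresholds repairs this: the exponent you extract is always killed by the $\log(1/t)$ in the denominator coming from the degraded Jensen count.

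The standard way around this is to get the zero-count bound on every slice from the \emph{fixed} function $f$, not from Jensen applied to a degrading reference value. By compactness of $\T^d$ and analyticity of $f$, one shows (via Weierstrass preparation on a finite cover, or by choosing a direction transverse to the zero variety of $f$) that there is a uniform integer $m$ such that on every one-dimensional complex slice in the relevant direction, $f$ has at most $m$ zeros counted with multiplicity. Rouch\'e's theorem then transfers this same bound to every $g$ with $\norm{g-f}_\rho$ small. With a \emph{$t$-independent} $N=m$ available on every slice, the Fubini iteration goes through cleanly with exponent $b = b(m,d)$, and the uniformity over $g$ is automatic. This is the missing ingredient in your Step 3.
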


This has, as a consequence, uniform $L^2$-boundedness of $L_N(A,\omega)$ in $N, \omega,$ and locally in $A.$ This result may also be found in \cite{DuarteKleinBook}, but our proof differs from the one therein.

\begin{mylemma}
Let $(A,\omega)$ be an analytic $M(2,\C)$ cocycle for which $\det A(x)$ is not identically zero. There is $\delta = \delta(A) > 0$ and $C = C(A)$ such that, for every cocycle $(B,\omega')$ such that $\norm{A - B}_\rho + \norm{\omega - \omega'} < \delta,$ 
$$\norm{L_N(B,\omega')}_{L^2} < C(A).$$
\end{mylemma}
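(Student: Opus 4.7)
The natural approach is to reduce to the $SL(2,\C)$ renormalization and use the identity relating $L_N$ and $L_N'$ pointwise in $x$,
\begin{equation*}
L_N(B,\omega',x) = L_N'(B,\omega',x) - \frac{1}{2N}\sum_{j=0}^{N-1} \ln|\det B(S^j_{\omega'} x)|.
\end{equation*}
By the triangle inequality in $L^2(\T^d)$, it suffices to bound the two pieces separately, and the task is purely about controlling the logarithmic singularities of the determinant piece.

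The first piece is trivial in $L^\infty$: from $\|B_N(x,\omega')\| \leq \prod_{j=0}^{N-1} \|B(S^j_{\omega'}x)\| \leq \|B\|_\rho^N$ we get $L_N'(B,\omega',x) \leq \ln\|B\|_\rho \leq \ln(\|A\|_\rho+\delta)$, and since $\tilde B \in SL(2,\C)$ forces $\|\tilde B_N\|\geq 1$, we also have $L_N \geq 0$, so only the upper bound matters.

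For the second piece I would apply Lemma \ref{Lem:UnifLoj} to $f = \det A$, which is not identically zero by hypothesis. Since $\det$ is continuous on $C_\rho$, shrinking $\delta$ so that $\|\det B - \det A\|_\rho$ is smaller than the Lojasiewicz threshold yields uniform constants $S = S(A)$ and $b = b(A)$ with $|\{x\in\T^d: |\det B(x)| < t\}| \leq St^b$ for every $B$ with $\|B-A\|_\rho < \delta$ and every $t > 0$. Combining this with the deterministic upper bound $|\det B(x)| \leq \|B\|_\rho^2 \leq C(A)$, a layer-cake computation gives
\begin{equation*}
\int_{\T^d} |\ln|\det B(x)||^2\, dx \leq C + \int_0^\infty 2s\, S e^{-bs}\, ds \leq C(A)
\end{equation*}
uniformly in such $B$. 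Finally, by Minkowski's inequality together with the $S_{\omega'}$-invariance of Lebesgue measure on $\T^d$,
\begin{equation*}
\left\|\frac{1}{2N}\sum_{j=0}^{N-1} \ln|\det B(S^j_{\omega'}\,\cdot\,)|\right\|_{L^2} \leq \frac{1}{2}\,\|\ln|\det B|\|_{L^2} \leq C(A),
\end{equation*}
which combined with the $L^\infty$ bound on $L_N'$ gives the claim.

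The argument is essentially routine once the uniform Lojasiewicz inequality is available; the only delicate point --- and presumably the reason the authors advertise that their proof differs from the one in \cite{DuarteKleinBook} --- is the use of uniformity in $B$ in the Lojasiewicz estimate to convert the pointwise singularities of $\ln|\det B|$ near the zero locus of $\det A$ into locally uniform polynomial decay of sublevel sets. Without that uniformity one would only obtain $L^2$-boundedness for the single cocycle $A$, not for a whole $C_\rho$-neighborhood, and the key passage from a deterministic pointwise bound on $L_N'$ to an $L^2$ bound on $L_N$ would collapse.
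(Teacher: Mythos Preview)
Your argument is essentially correct and close to the paper's, but there is a small logical slip in the exposition that is worth flagging.

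You open by saying that, via the triangle inequality, it suffices to bound $\|L_N'(B,\omega',\cdot)\|_{L^2}$ and the determinant piece separately, and then claim that ``the first piece is trivial in $L^\infty$.'' That is not true as stated: $L_N'(B,\omega',x)=\frac1N\ln\|B_N(x)\|$ is only bounded \emph{above} by $\ln\|B\|_\rho$; it can be arbitrarily negative (indeed $-\infty$) near the zero set of $\det B$, so $L_N'$ is not in $L^\infty$ and you have not bounded $\|L_N'\|_{L^2}$. The argument you actually run is not a triangle-inequality split but a sandwich: from $L_N\ge 0$ and $L_N'\le C$ you get
\[
0\;\le\;L_N(x)\;=\;L_N'(x)-\tfrac{1}{2N}\sum_j\ln|\det B(S^j_{\omega'}x)|\;\le\;C+\tfrac{1}{2N}\sum_j\big|\ln|\det B(S^j_{\omega'}x)|\big|,
\]
and then Minkowski plus shift-invariance and your Lojasiewicz/layer-cake bound finish the job. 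Once rephrased this way the proof is complete.

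This is a slight reorganization of the paper's proof rather than a genuinely different approach. The paper bounds $|L_N'(x)|$ pointwise by splitting into $F_+=\{\ln\|B_N\|\ge 0\}$ and $F_-=\{\ln\|B_N\|<0\}$, using $|\det M|\le 2\|M\|^2$ on $F_-$ to get $L_N'(x)\ge \tfrac{1}{2N}\sum_j\ln|\det B(S^j_{\omega'}x)|$; this is exactly the inequality $\|\tilde B_N\|\ge 1$ that you invoke via the $SL(2,\C)$ renormalization, just stated for $L_N'$ instead of $L_N$. The Lojasiewicz/$L^2$ computation for $\ln|\det B|$ is the same in both (dyadic shells in the paper, layer cake for you). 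So the two proofs use identical ingredients; yours simply skips the intermediate pointwise bound on $|L_N'|$ at the cost of needing the sandwich rather than a straight triangle inequality.
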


\begin{proof}
Note that it suffices to prove that $L_N'(B,\omega', x)$ and $\ln|\det B(x)|$ obey this $L^2$-estimate.  Clearly we have
\begin{align*}
\int_{\T^d} \ln |\det B(x)| & = \int_{|\det B(x)| \geq 1} + \sum_{j = 0}^\infty \int_{ 2^{-j} > |\det B(x)| \geq  2^{-j - 1}}.
\end{align*}
Moreover, since $A(x)$ is analytic, there is some $C(A) < \infty$ such that $\norm{A(x)}_\rho< C.$ Thus, for any $B$ such that $\norm{A - B}_\rho < C/2,$ we have
$$\norm{B(x)}_\rho < 2C.$$
It follows that
$$|\det B(x)| < 4C^2.$$
On $\set{x: \epsilon \leq |\det B(x)| < 1},$
$$\left|\ln |\det B(x)|\right| < -\ln \epsilon.$$
Hence, on $\set{x: 2^{- j - 1} \leq |\det B(x)| < 2^{-j}},$ 
$$\left|\ln |\det B(x)|\right| < (j + 1)\ln 2.$$
By Lemma \ref{Lem:UnifLoj}, we have, for $\norm{A - B}_\rho < \delta(A)$
$$\left|\set{x: 2^{- j - 1} \leq |\det B(x)| < 2^{-j}}\right| < S(A) \left(2^{- j}\right)^b.$$
Altogether, this yields
\begin{align*}
\int_{\T^d} |\ln |\det B(x)||^2 &\leq 4(\ln 4C(A))^2 + S(A) \sum 2^{-bj} \left((j + 1)\ln 2\right)^2\\
&\leq (\ln 2C(A))^2 + S(A)C(A)\\
&< C(A).
\end{align*}

Now consider $L_N'(B,\omega',x).$ We claim that, for some $C = C(A)$ and $\delta > 0,$ and every $B$ such that $\norm{A - B}_\rho < \delta,$ we have
$$|L_{N}'(B,\omega',x)| \leq C + \frac 1 N \sum_{j = 0}^{N - 1}\left|\ln|\det (B(x + j\omega'))|\right|.$$
The desired conclusion quickly follows from this inequality and the above estimate for $\ln|\det B(x)|,$ so we turn our attention to a proof.

Let $\delta > 0$ be such that Lemma \ref{Lem:UnifLoj} holds for the analytic functions $\det B(x)$ when $\norm{B - A}_\rho < \delta.$ Observe that there is $C = C(A) > 0$ such that $\norm{A}_\rho < e^C,$ so $\frac 1 N \ln\norm{A_N}_\rho \leq C.$ By our assumptions on $\norm{B - A}_\rho,$ it follows that
$$\norm{B}_\rho < e^{2C},$$
and thus
$$\frac 1 N \ln \norm{B_N(x)} < 2C.$$
Now we consider two sets:
$$F_+ := \set{x: \ln \norm{B_N(x)} \geq 0}$$
and
$$F_- := \set{x: \ln\norm{B_N(x)} < 0}.$$
Clearly, for $x \in F_+,$ the desired inequality holds. Consider $x \in F_-.$ Recall that, for any $2\times 2$ matrix $A,$ we have
$$|\det(A)| \leq 2\norm{A}^2,$$
so it follows that
$$\frac 1 N \ln\norm{B_N(x)} \geq \frac 1 {2N} \ln|\det(B_N(x))| = \frac 1 {2N} \sum_{j = 0}^{N - 1} \ln|\det(B(x + j\omega))|.$$
For $x \in F_-,$ this implies
$$|L_N'(B,\omega,x)| \leq \left|\frac 1 {2N} \sum_{j = 0}^{N - 1} \ln|\det(B(x + j\omega))|\right|,$$
which, after applying triangle inequality, is our desired bound.

\end{proof}

We conclude this section with two essential results which will be used in the sequel. The first is a so-called {\it{large deviation estimate}} and the second is a consequence of the Avalanche Principle which, in this form, was originally due to Bourgain for $SL(2,\C)$ cocycles, but extended to non-identically singular $M(2,\C)$ cocycles by us in \cite{PowellContinuity}. We refer readers to either \cite{BourgainContinuity} (for the $SL(2,\C)$ case) or \cite{PowellContinuity} (for the $M(2,\C)$ case) for proofs.

\begin{mythm}\label{Thm:LDT}[\cite{PowellContinuity} Theorem 2.2]
Let $(A,\omega)$ be an analytic $M(2,\C)$ cocycle, with an analytic extension to $|\Im(z_j)| < \rho,$ for which $\det A(x)$ is not identically zero.  Suppose $\omega \in \T^d$ is such that
$$\norm{k\cdot\omega} > \delta_0$$
for all $0 < |k| < K_0.$ Moreover, suppose 
$$N > K_0 \delta_0^{-1}.$$
Then there are $\delta = \delta(A) > 0,$ $c = c(d) < 1,$ and $C_\rho < \infty$ such that for any $B$ with with an analytic extension to $|\Im(z_j)| < \rho$ satisfying $\norm{B - A}_\rho < \delta,$ 
\begin{equation}
\left|\set{x \in \T^d: \left|L_N(B, x) - L_N(B) \right| > \rho^{-1} K_0^{-c}}\right| < e^{-C_\rho K_0^c}.
\end{equation}
\end{mythm}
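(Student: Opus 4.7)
The plan is to follow the Bourgain-Goldstein strategy for large deviation estimates of plurisubharmonic functions on $\T^d$, adapted to the $M(2,\C)$ setting via the uniform Lojasiewicz inequality. The central object is $u_N(z) := \frac{1}{N}\log\|B_N(z,\omega')\|$, which is plurisubharmonic on $|\Im(z_j)| < \rho/2$, uniformly bounded above by $\log(2\|A\|_\rho)$ for $\|B - A\|_\rho < \delta$, and has uniformly bounded $L^2$ norm by the preceding lemma. From these two inputs, the Riesz representation of plurisubharmonic functions, applied slice-by-slice in each coordinate, yields the Fourier decay $|\hat{u}_N(k)| \leq C_\rho/|k|$ for every $k \in \Z^d\setminus\{0\}$, uniformly in $N$ and in $B$.

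Next I would establish an approximate $S_{\omega'}$-invariance of $u_N$. Using the factorization $B_{N+K-1}(x) = B_{K-1-j}(x+(N+j)\omega')\,B_N(x+j\omega')\,B_j(x)$ for $0 \leq j \leq K-1$, submultiplicativity of the operator norm, and the uniform sup-norm bound on $B$, summing in $j$ and dividing by $KN$ gives one direction of
$$\left\| u_N - \frac{1}{K}\sum_{j=0}^{K-1} u_N\circ S_{\omega'}^j \right\|_{L^1} \lesssim \frac{K}{N}.$$
The matching reverse inequality is the delicate $M(2,\C)$-specific step, since it requires lower bounds on products of non-invertible matrices; it is obtained by restricting to the set where $|\det B_N|$ is not too small (controlled uniformly by Lemma \ref{Lem:UnifLoj}) and absorbing the exceptional set into the $L^2$ bound.

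Set $\phi_N := u_N - L_N(B,\omega')$ and consider the Birkhoff average $\psi_K(x) := \frac{1}{K}\sum_{j=0}^{K-1}\phi_N(x+j\omega')$, whose Fourier coefficients are $\hat{\psi}_K(k) = \hat{\phi}_N(k)\,D_K(k\cdot\omega')$, with $D_K(t) = \frac{1}{K}\sum_{j=0}^{K-1} e^{2\pi ijt}$ satisfying $|D_K(t)| \lesssim 1/(K\|t\|)$. I would then split the Fourier modes into low ($0 < |k| < K_0$), middle ($K_0 \leq |k| \leq K$), and high ($|k| > K$) ranges. The Diophantine hypothesis (together with $\|\omega - \omega'\|$ small, enforced by shrinking $\delta$) gives $|D_K(k\cdot\omega')| \lesssim 1/(K\delta_0)$ on low modes, while the Fourier decay gives $|\hat{\phi}_N(k)| \leq C_\rho/K_0$ on middle and high modes. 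Balancing with $K \sim K_0\delta_0^{-1}$ (which is at most $N$ by hypothesis), applying Parseval, and using the approximate invariance to pass from $\|\psi_K\|_{L^2}$ back to $\|\phi_N\|_{L^2}$ with error $O(K/N)$, one obtains $\|\phi_N\|_{L^2}^2 \lesssim \rho^{-2}K_0^{-2c}$ for some $c = c(d) > 0$. The exponent degrades in dimension $d \geq 2$ because the $1/|k|$ decay is not $\ell^2$-summable there, forcing a suboptimal balance among the three ranges.

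The $L^2$ bound on $\phi_N$ upgrades to the measure-of-sublevel-set statement via John-Nirenberg: $\phi_N$ has uniformly bounded BMO norm as a plurisubharmonic function with controlled $L^1$ norm on the strip, and BMO functions satisfy subexponential tail bounds, giving the claimed $e^{-C_\rho K_0^c}$ estimate. \textbf{The main obstacle} is preserving uniformity in $B$ despite the singularities of $\det B$: the naive renormalization $\tilde B = B/|\det B|^{1/2}$ to $SL(2,\C)$ introduces logarithmic poles in $\log\|\tilde B_N\|$ wherever $\det B$ vanishes, destroying the uniform upper bound on the plurisubharmonic function and hence the Fourier decay. The resolution, which is the technical content of \cite{PowellContinuity}, is to work with $\log\|B_N\|$ directly and absorb the contribution of $\log|\det B|$ into integrable remainders bounded uniformly over small perturbations of $A$ via Lemma \ref{Lem:UnifLoj}.
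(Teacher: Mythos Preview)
The paper does not actually prove this theorem: it is quoted from \cite{PowellContinuity} (with the $SL(2,\C)$ antecedent in \cite{BourgainContinuity}) and stated in Section~\ref{Section:FirstSection} explicitly without proof, directing readers to those references. So there is no in-paper argument to compare against.

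That said, your outline is the correct Bourgain--Goldstein template and is consistent with what the cited works do: plurisubharmonicity of $u_N$, uniform sup bound plus the $L^2$ bound from the preceding lemma, the $C_\rho/|k|$ Fourier decay, approximate shift-invariance with error $O(K/N)$, Diophantine averaging over the low/middle/high frequency split, and the BMO/John--Nirenberg upgrade to an exponential tail. You also correctly flag the $M(2,\C)$-specific difficulty and its resolution through the uniform \L ojasiewicz lemma, which is exactly the point of the cited extension.

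Two small corrections. First, the statement fixes a single frequency $\omega$; only the matrix $B$ varies in a $\delta(A)$-neighborhood of $A$. Your sketch introduces an $\omega'$ and speaks of shrinking $\delta$ to control $\|\omega-\omega'\|$, which is extraneous here. Second, your ``reverse inequality'' for approximate invariance is not needed in the form you describe. In the standard argument one only uses the one-sided submultiplicativity bound $u_N(x)\le \frac1K\sum_{j=0}^{K-1}u_N(x+j\omega)+O(K/N)$ together with the psh sup-versus-mean machinery; the contribution of $\log|\det B|$ enters when passing from $L_N'$ to $L_N$ and is handled by the uniform integrability furnished by Lemma~\ref{Lem:UnifLoj}, not by a pointwise lower bound on products. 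With those adjustments your sketch matches the argument the paper is citing.
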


\begin{mythm}\label{Thm:APApp}[\cite{PowellContinuity} Theorem 4.3]
Let $(A,\omega)$ be an analytic $M(2,\C)$ cocycle with an analytic extension to $|\Im(z_j)| < \rho.$ Fix $x\in \T^d$ and $\delta > 0.$ Let $N > N_0(\delta)$ be sufficiently large and $N | N_1$ with $N \leq N_1.$ Suppose that
\begin{align}
L_N(A, x)&> \delta\\
\left|L_N(A, x) - L_{2N}(A, x)\right| &< \frac{1}{100} L_N(A, x)\\
\max_{n = N, 2N}\left|L_n(A, x) - L_n(A, x + j N\omega)\right| &< \frac\delta{100},
\end{align}
for all $ j \leq N_1/N.$
Then 
\begin{align}\begin{split}
\bigg|L_{N_1}(A, x) + \frac{1}{n}\sum_{j = 0}^{n-1} L_N(A, x + &j N\omega) - \frac 2 n \sum_{j = 0}^{n - 1}L_{2N}(A, x + jN \omega)\bigg| \\
&< \exp\left(-\frac{N}{4}L_{N}(x)\right) + C L_N(A,x) \frac{N}{N_1}.
\end{split}\end{align}
Here $C$ is an absolute constant.
\end{mythm}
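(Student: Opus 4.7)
The plan is to apply the Avalanche Principle (AP) from \cite{PowellContinuity} to the block factorization
\[
A_{N_1}(x,\omega) = g_{n-1} g_{n-2} \cdots g_1 g_0, \qquad g_j := A_N(x + jN\omega), \qquad n := N_1/N.
\]
The abstract AP takes a chain of matrices $g_0, \ldots, g_{n-1} \in M(2,\C)$ satisfying a \emph{size} condition (each $\|g_j\|$ exponentially large) and a \emph{non-cancellation} condition ($\|g_{j+1}g_j\|$ comparable to $\|g_{j+1}\|\|g_j\|$), and outputs the telescoping identity
\[
\Bigl| \log\|g_{n-1}\cdots g_0\| + \sum_{j=1}^{n-2} \log\|g_j\| - \sum_{j=0}^{n-2} \log\|g_{j+1}g_j\| \Bigr| \leq C n \exp\bigl(-\tfrac{1}{2}N L_N(A,x)\bigr),
\]
once the gap between the size and cancellation scales is controlled.

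To verify the AP hypotheses, I invoke all three assumed inequalities. From the assumption $L_N(A,x)>\delta$ together with the $n=N$ case of the third hypothesis, $\tfrac{1}{N}\log\|g_j\| = L_N(A, x+jN\omega)$ lies within $\delta/100$ of $L_N(A,x)$, giving the required exponential lower bound on $\|g_j\|$. Since $g_{j+1}g_j$ is itself the length-$2N$ iterate $A_{2N}(x + jN\omega,\omega)$, we have $\tfrac{1}{2N}\log\|g_{j+1}g_j\| = L_{2N}(A, x+jN\omega)$, which by the $n=2N$ case of (iii) and then by (ii) lies within $\delta/50$ of $L_N(A,x)$; multiplying by $2N$ gives
\[
\log\|g_{j+1}g_j\| \geq \log\|g_{j+1}\| + \log\|g_j\| - c N L_N(A,x),
\]
with $c$ small, which is the required non-cancellation.

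Applying the AP, dividing by $N_1 = nN$, and rewriting each logarithm as a finite-scale Lyapunov exponent yields
\[
\Bigl| L_{N_1}(A,x) + \tfrac{1}{n}\sum_{j=1}^{n-2} L_N(A,x+jN\omega) - \tfrac{2}{n}\sum_{j=0}^{n-2} L_{2N}(A,x+jN\omega) \Bigr| \leq \exp\bigl(-\tfrac{N}{4}L_N(A,x)\bigr).
\]
To match the full sums $\tfrac{1}{n}\sum_{j=0}^{n-1}$ in the statement, I absorb the missing boundary terms (two for the $L_N$ sum and one for the $L_{2N}$ sum). Each such term is $O(L_N(A,x))$ by hypotheses (ii) and (iii), and is weighted by $1/n$, contributing $O(L_N(A,x) \cdot N/N_1)$. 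This accounts for the extra $C L_N(A,x) N/N_1$ in the stated bound.

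The main obstacle, which is precisely what the $M(2,\C)$ version in \cite{PowellContinuity} resolves, is the AP itself: the classical $SL(2,\C)$ proof exploits the singular-value gap $s_1/s_2 \sim \|g\|^2$, which requires boundedness of $g^{-1}$ and hence pointwise renormalization to $SL(2,\C)$. For cocycles with vanishing determinant, such renormalization destroys uniform bounds, so the AP must be proven using only $\|g_j\|$ and $\|g_{j+1}g_j\|$ as inputs; the closeness $|L_{2N}(A,x) - L_N(A,x)| < L_N(A,x)/100$ provided by hypothesis (ii) plays the role of a surrogate singular-value gap by forcing the ratio $\|g_{j+1}g_j\|/(\|g_{j+1}\|\|g_j\|)$ to be close to $1$ on the exponential scale.
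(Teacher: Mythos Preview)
The paper does not prove this theorem; it is stated as a citation of \cite{PowellContinuity} Theorem 4.3, and the surrounding text explicitly says ``We refer readers to either \cite{BourgainContinuity} (for the $SL(2,\C)$ case) or \cite{PowellContinuity} (for the $M(2,\C)$ case) for proofs.'' So there is no in-paper argument to compare against.

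Your sketch is the standard route and is essentially correct: block-factor $A_{N_1}$ into $n=N_1/N$ pieces $g_j=A_N(x+jN\omega)$, use hypotheses (i) and (iii) to get the uniform size bound $\log\|g_j\|\approx NL_N(A,x)$, use that $g_{j+1}g_j=A_{2N}(x+jN\omega)$ together with (ii) and (iii) to get the non-cancellation bound, apply the Avalanche Principle, divide by $N_1$, and absorb the $O(1)$ missing boundary terms into $C L_N(A,x)N/N_1$. One small slip: after combining (ii) and (iii) the deviation of $L_{2N}(A,x+jN\omega)$ from $L_N(A,x)$ is bounded by $\delta/100 + L_N(A,x)/100 \leq L_N(A,x)/50$, not $\delta/50$ (since $L_N(A,x)>\delta$, not $<\delta$); this does not affect the argument. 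Your closing remark correctly identifies where the real work lies in the $M(2,\C)$ setting --- namely, establishing an AP that does not rely on pointwise invertibility --- and that is exactly what is deferred to \cite{PowellContinuity}.
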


\begin{Remark}
In the above, $N_0$ is such that $N_0 \delta > 2.$
\end{Remark}


\section{Finite-scale weak-H\"older continuity}\label{Section:FiniteScaleCont}

In the Schr\"odinger cocycle case (and the $SL(2,\C)$ case more generally), one of the key observations is that, for fixed $N,$ $L_N(A,\omega)$ is jointly continuous in $A$ and $\omega$ for any $\omega.$ This is a simple consequence of the everywhere invertibility of the cocycle, $A.$ In this section, we prove an analogous result for non-identically-singular cocycles. Indeed, our strategy hinges on this fact, as we observe that, given a sequence of continuous functions (in this case, $L_N(A,\omega)$), the continuity of the limiting object (in this case, $L(A,\omega)$) may be obtained by quantitatively estimating the rate of convergence ($L_N \to L$) uniformly in $A$ and $\omega.$ This argument does not work if $L_N(A,\omega)$ is not continuous!

We do not claim that this result is novel; it is included for completeness, and because the proof introduces some ideas which are useful for studying singular cocycles.

\begin{mythm}\label{Thm:FiniteCont}
Let $A(x)$ be an analytic quasi-periodic $M(2,\C)$-cocycle on $\T^d$ with an analytic extension to the strip $|\Im(z_j)| < \rho.$ 
There are $\delta_1(A) > 0$ and $b(A) > 0$ such that for any $\alpha \geq 1,$ whenever $\norm{A - B} < \delta_1,$ we have
\begin{align}\begin{split}
|L_N'(A,\omega) - L_N'(B,\omega')| &< C(A)e^{-2N^{\alpha - 1}b(A)} \\
&\quad+ e^{2N^\alpha} \left(\norm{A - B} + N\norm{\omega - \omega'}\right).\end{split}
\end{align}
Moreover,
\begin{align}
\left|\int_{\T^d} \ln|\det A(x)| - \ln|\det B(x)| dx \right| &< 
C(A) \norm{A - B}^{b/(1 + b)}.
\end{align}

\end{mythm}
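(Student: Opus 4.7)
The plan is to prove both estimates by a common splitting strategy: decompose $\T^d$ into a good set on which the relevant norm or determinant exceeds a threshold $t$, and a bad set on which it does not; control the good set through the elementary Lipschitz sensitivity of $\ln$ away from zero, and the bad set through the Uniform Lojasiewicz Lemma (Lemma~\ref{Lem:UnifLoj}) combined with the uniform $L^2$ bound on $L_N'$ from Section~\ref{Section:FirstSection}.

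For the first estimate I will start from the pointwise inequality
\begin{equation*}
\norm{A_N(x,\omega) - B_N(x,\omega')} \leq e^{CN}\left(\norm{A - B}_\rho + N\norm{\omega - \omega'}\right),
\end{equation*}
which I would obtain by telescoping $A_N(x,\omega) = A(x+(N-1)\omega)\cdots A(x)$ first in the matrix variable and then in the frequency, and applying Cauchy's estimate on the strip $|\Im z_j|<\rho/2$ to bound the $x$-Lipschitz constant of $A$; polynomial-in-$N$ factors get absorbed into $e^{CN}$. Choosing $t = e^{-N^\alpha}$ and the good set $G = \{x : \norm{A_N(x,\omega)} \geq t\}$, I may assume $\norm{A-B}_\rho + N\norm{\omega-\omega'} < \tfrac12 e^{-N^\alpha - CN}$ since otherwise the claimed bound is trivial. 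Then $\norm{B_N(x,\omega')} \geq t/2$ on $G$, and the elementary inequality $|\ln a - \ln b| \leq |a-b|/\min(a,b)$ produces the second term of the claim after dividing by $N$ and using $\alpha\geq 1$. For $G^c$, the bound $\norm{M}^2 \geq |\det M|$ for $2\times 2$ matrices forces $\prod_{j=0}^{N-1}|\det A(x+j\omega)| \leq t^2$, so at least one factor has absolute value at most $t^{2/N}$. A union bound via Lemma~\ref{Lem:UnifLoj} (whose hypotheses are uniform in $B$ close to $A$) will yield $|G^c| \leq NS(A)\,e^{-2bN^{\alpha-1}}$, and Cauchy-Schwarz combined with the uniform $L^2$ estimate $\norm{L_N'}_{L^2} \leq C(A)$ will give the first term after absorbing polynomial $N$-factors into $b(A)$.

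For the second estimate I apply the same dichotomy directly to $|\det A|$, with good set $F = \{|\det A| \geq t\}$. The pointwise inequality $\big|\,|\det A| - |\det B|\,\big| \leq C(A)\norm{A - B}_\rho$ (determinant is polynomial in entries, which stay uniformly bounded) gives $\int_F |\ln|\det A| - \ln|\det B||\, dx \leq 2C(A)\norm{A-B}_\rho/t$ once $\norm{A-B}_\rho \leq t/(2C(A))$. On $F^c$, Lemma~\ref{Lem:UnifLoj} together with a dyadic layer-cake decomposition over the level sets $\{|\det A| \in [t\,2^{-k-1}, t\,2^{-k}]\}$ yields $\int_{F^c}|\ln|\det A||\,dx \leq CS t^b (1 + |\ln t|)$, and symmetrically for $B$. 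Balancing the two contributions by taking $t = \norm{A-B}_\rho^{1/(1+b)}$ produces $C(A)\norm{A-B}_\rho^{b/(1+b)}$ up to a logarithmic factor, which I absorb by slightly decreasing $b$.

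The main obstacle, relative to the $SL(2,\C)$ setting, is that $\ln\norm{A_N(x)}$ can be arbitrarily negative on the set where $\det A_N(x)$ is small, so any naive pointwise Lipschitz estimate for the logarithm fails on a set of positive measure. The strategy above localizes this degeneracy to $G^c$ (resp.\ $F^c$), whose measure is controlled by the Uniform Lojasiewicz Lemma, while the $L^2$ bound on $L_N'$ is what keeps the logarithm from concentrating too much mass on this small exceptional set; this combination is the substitute for the invertibility used in the $SL(2,\C)$ argument.
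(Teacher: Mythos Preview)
Your proposal is correct and follows essentially the same approach as the paper: split $\T^d$ into a good set where the norm (resp.\ determinant) exceeds a threshold $t$, use the Lipschitz estimate for $\ln$ on the good set together with a telescoping bound for $\norm{A_N - B_N}$, and control the bad set via the uniform Lojasiewicz lemma plus the uniform $L^2$ bound on $L_N'$, with the same threshold choices $t = e^{-N^\alpha}$ and $t = \norm{A-B}^{1/(1+b)}$ in the two parts. The only cosmetic difference is that the paper introduces two bad sets $F_A(\delta_0)$ and $F_B(\delta_0)$ and splits the integral into four pieces, whereas you work with a single bad set $G^c$ and handle $B_N$ on $G$ via the wlog smallness assumption on the perturbation; this is a mild streamlining of the same argument.
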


\begin{Remark}
Note that this implies, by the definition of $L_N$ and $L_N',$ that
\begin{align}\begin{split}
|L_N(A,\omega) - L_N(B,\omega')| &< C(A)e^{-2N^{\alpha - 1}b(A)} \\
&\quad+ e^{N^\alpha} \left(\norm{A - B} + N\norm{\omega - \omega'}\right).\end{split}
\end{align}
\end{Remark}

\begin{proof}
We first prove the result for $L_N'(A).$ The result for the determinant follows from an analogous argument, and we will discuss how to adjust the proof afterwards.

Let $C_0(A)$ be such that $\norm{A(x)} \leq e^{C_0}.$ For $\norm{A - B}$ sufficiently small, depending only on $C_0,$ $\norm{B(x)} \leq e^{2C_0}.$
Fix $\delta_0 > 0$ and set
\begin{align}
F_A(\delta_0) &= \set{x\in \T^d: \norm{A_N} < \delta_0}\\
F_B(\delta_0) &= \set{x\in \T^d: \norm{B_N} < \delta_0}.
\end{align}

Clearly 
$$\int_{\T^d} = \int_{F_A\cap F_B} + \int_{F_A^c \cap F_B} + \int_{F_A \cap F_B^c} + \int_{F_A^c \cap F_B^c}.$$

Note that $|\det(A_N(x))| \leq \norm{A_N(x)}^2$ for all $x.$ Thus
$$F_A(\delta_0) \subset \set{x: |\det A_N(x)| < \delta_0^2} \subset \bigcup_{j = 0}^{N - 1} \set{x: |\det A(x _ j\omega)| < \delta_0^{2/N}},$$
and
$$F_b(\delta_0) \subset \set{x: |\det B_N(x)| < \delta_0^2} \subset \bigcup_{j = 0}^{N - 1} \set{x: |\det B(x + j\omega')| < \delta_0^{2/N}}.$$
Moreover, we know $\ln\norm{A_N(x)} \in L^p, 1 \leq p < \infty$ and, for $\norm{A - B} < \delta_1 = \delta_1(A),$ $\ln\norm{B_N(x)}\in L^p, 1 \leq p < \infty$ and $\norm{\ln\norm{B(x)}}_p < C(p) \norm{\ln\norm{A(x)}}_p.$ These all hold uniformly in $N.$
This implies that, for $\norm{A - B} < \delta_1,$
$$\int_{F_A\cap F_B} + \int_{F_A^c \cap F_B} + \int_{F_A \cap F_B^c} \leq 2C(A) |F_A(\delta_0)| + C(A) |F_B(\delta_0)|,$$
where $|F_B|$ denotes the Lebesgue measure.
Moreover, by the Lojaciewz inequality, 
$$|F_A(\delta_0)| < C(A) N \delta_0^{2b(A)/N}$$
and
$$|F_B(\delta_0)| < C(A) N \delta_0^{2b(A)/N}.$$

Finally, for $x \in F_A^c \cap F_B^c,$ we have
\begin{align}
|\ln\norm{A_N(x)} - \ln\norm{B_N(x)}| &\leq \max\set{\norm{A_N - B_N}\norm{B_N}^{-1}, \norm{B_N - A_N}\norm{A_N}^{-1}}\\
&\leq \norm{A_N - B_N}\delta_0^{-1}.
\end{align}

Thus, we have
\begin{align}\begin{split}
\int_{\T^d} &= \int_{F_A\cap F_B} + \int_{F_A^c \cap F_B} + \int_{F_A \cap F_B^c} + \int_{F_A^c \cap F_B^c}\\
&\leq C(A) \delta_0^{2b(A)/N} + N^{-1}\norm{A_N - B_N}\delta_0^{-1}.\end{split}
\end{align}
Moreover, a standard telescoping argument implies
\begin{align}
\norm{A_N - B_N} &\leq N \norm{A}^{2N}\left(\norm{A - B} + \norm{N\omega - N\omega'}\right)\\
&\leq N \norm{A}^{2N}\left(\norm{A - B} + N\norm{\omega - \omega'}\right).
\end{align}
We thus have
\begin{align}\begin{split}
|L_N'(A,\omega) - L_N'(B,\omega')| &< C(A) \delta_0^{2b(A)/N} \\
&\quad+ \norm{A}^{2N}\left(\norm{A - B} + N\norm{\omega - \omega'}\right)\delta_0^{-1}.\end{split}
\end{align}
Finally, set 
$$\delta_0 = e^{-N^\alpha}, \quad \alpha > 1.$$
Recall that $\norm{A} \leq e^{C_0},$ and consequently
\begin{align*}
|L_N'(A,\omega) - L_N'(B,\omega')| &< C(A) e^{-2N^{\alpha - 1}b(A)} \\
&\quad+ e^{2C_0 N} \left(\norm{A - B} + N\norm{\omega - \omega'}\right)e^{N^\alpha}\\
&\leq C(A)e^{-2N^{\alpha - 1}b(A)} + e^{2N^\alpha} \left(\norm{A - B} + N\norm{\omega - \omega'}\right).
\end{align*}

To obtain the continuity for $\ln|\det(A)|,$ we use the above argument where $F_A(\delta_0)$ and $F_B(\delta_0)$ are redefined as
\begin{align}
F_A(\delta_0)' &:= \set{x: |\det(A(x))| < \delta_0}\\
F_B(\delta_0)' &:= \set{x: |\det(B(x))| < \delta_0}.
\end{align}
At the end, we take $\delta_0 = \norm{A - B}^{1/(1 + b)},$ where $b = b(a)$ is as in the Lojasiewicz inequality.
\end{proof}


\section{Lyapunov exponents: local uniform rate of convergence with fixed Diophantine frequency}\label{Section:Firstbasestep}
In this section, we will establish an inductive scheme to obtain a local uniform rate of convergence for the Lyapunov exponents corresponding to fixed Diophantine frequencies. We will break this up into multiple steps.

\begin{mylemma}
Let $A(x)$ be an analytic quasi-periodic $M(2,\C)$-cocycle on $\T^d$ with with an analytic extension to $|\Im(z_j)| < \rho$ such that $\det A(x) \not\equiv 0.$ Suppose that $\omega \in \T^d$ is such that $\norm{k\cdot \omega} > \delta_0 > 0$ for all $k\in\Z^d$ with $0 < |k| \leq K_0,$ where $K_0$ satisfies
\begin{align}
K_0 &> (\rho \kappa)^{-C}\label{eq:Mixed:hyp2}\\
N_0 &> \kappa^{-C}\delta_0^{-1}K_0.\label{eq:Mixed:hyp3}
\end{align}
Moreover, suppose that 
\begin{align}
L_{N_0}(A) &> 10^3\kappa\\
|L_{N_0} - L_{2N_0}| &< \frac1{100} L_{N_0}.\label{eq:APHardCond}
\end{align} 
Then 
\begin{equation}
|L_{N} + L_{N_0} - 2L_{2N_0}| < C'(A)e^{-C_\rho K_0^c}
\end{equation}
when $N_0 | N, N = 2^jN_0$ for some $j \geq 0,$ and 
\begin{equation}
N < N_0 e^{\frac 12 \rho K_0^c}.
\end{equation} 
In the above, $C$ is a sufficiently large absolute constant depending only on $d$ which is defined in the proof, $c = C^{-1},$ and $C'(A)$ is a constant which is uniform in $A.$
\end{mylemma}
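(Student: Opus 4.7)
The strategy is: apply the large deviation estimate (Theorem~\ref{Thm:LDT}) at scales $N_0$ and $2N_0$, use the resulting pointwise closeness to verify the hypotheses of Theorem~\ref{Thm:APApp} on a large set, apply that theorem at base scale $N_0$ with $N_1 = N$, and integrate. The hypothesis \eqref{eq:Mixed:hyp3} ensures $N_0 > K_0 \delta_0^{-1}$, so Theorem~\ref{Thm:LDT} produces sets $\mathcal{B}_{N_0}, \mathcal{B}_{2N_0} \subset \T^d$, each of measure $< e^{-C_\rho K_0^c}$, outside which $|L_n(A,x) - L_n(A)| < \rho^{-1} K_0^{-c}$ for $n \in \{N_0, 2N_0\}$. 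I then form the shifted union $\mathcal{B} := \bigcup_{0 \leq j < N/N_0}\bigcup_{n} (\mathcal{B}_n - jN_0\omega)$ of measure at most $2(N/N_0)e^{-C_\rho K_0^c}$; the upper bound $N < N_0 e^{\rho K_0^c/2}$ and a sufficiently large choice of $C$ in \eqref{eq:Mixed:hyp2} keep this well below $e^{-c' K_0^c}$ for some $c' > 0$.

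For $x \notin \mathcal{B}$ I would verify the three hypotheses of Theorem~\ref{Thm:APApp} at scale $N_0$ with $\delta$ slightly below $10^3\kappa$: positivity follows from $L_{N_0}(A,x) > L_{N_0}(A) - \rho^{-1}K_0^{-c} > 999\kappa$, using \eqref{eq:Mixed:hyp2} to absorb the LDT error; the closeness $|L_{N_0}(A,x) - L_{2N_0}(A,x)| < \tfrac{1}{100}L_{N_0}(A,x)$ follows from \eqref{eq:APHardCond} plus the two LDT bounds via triangle inequality; and the shift closeness $|L_n(A,x) - L_n(A, x+jN_0\omega)| < \delta/100$ for all $j < N/N_0$ is immediate since both values are within $\rho^{-1}K_0^{-c}$ of the constant $L_n(A)$. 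Applying Theorem~\ref{Thm:APApp} with $N_1 = N$ and $n = N/N_0$ then gives, pointwise on $\mathcal{B}^c$,
\[
\left| L_N(A,x) + \tfrac{1}{n}\sum_{j=0}^{n-1} L_{N_0}(A,x+jN_0\omega) - \tfrac{2}{n}\sum_{j=0}^{n-1} L_{2N_0}(A,x+jN_0\omega)\right| < \exp\!\left(-\tfrac{N_0}{4}L_{N_0}(A,x)\right) + CL_{N_0}(A,x)\tfrac{N_0}{N}.
\]
The first error is $\exp(-c\kappa N_0) \ll e^{-C_\rho K_0^c}$ by \eqref{eq:Mixed:hyp3}, since $N_0\kappa$ dominates $K_0$.

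Integrating in $x$ over $\T^d$, translation invariance $\int_{\T^d} L_n(A, x+jN_0\omega)\,dx = L_n(A)$ collapses the shifted averages, so the integrated left-hand side becomes exactly $L_N(A) + L_{N_0}(A) - 2L_{2N_0}(A)$. The contribution of $\mathcal{B}$ to the integral is controlled via Cauchy-Schwarz and the $L^2$-boundedness of $L_n$ from Section~\ref{Section:FirstSection} by $C(A)|\mathcal{B}|^{1/2} < C(A)(N/N_0)^{1/2} e^{-C_\rho K_0^c/2}$, which is absorbed into $C'(A)e^{-C_\rho K_0^c}$ after reducing the exponent constant by a harmless factor.

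The main obstacle is the remaining AP error $CL_{N_0}(A,x) N_0/N$: for $N$ close to the upper endpoint $N_0 e^{\rho K_0^c/2}$ the ratio $N_0/N$ itself supplies the needed exponential smallness and the argument concludes directly; for smaller dyadic $N = 2^j N_0$, where $N_0/N$ is not yet $e^{-c K_0^c}$-small, the plan is to first establish the sharp bound at a large reference scale $N^\dagger = 2^{J^\dagger}N_0$ and then combine it with LDT and further applications of Theorem~\ref{Thm:APApp} at intermediate scales to back-propagate the sharp estimate down the dyadic ladder. Balancing these competing exponential rates by choosing $c$ small and $C$ in \eqref{eq:Mixed:hyp2} sufficiently large yields the claimed bound $C'(A)e^{-C_\rho K_0^c}$ uniformly in $N$.
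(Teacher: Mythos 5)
Your proposal follows essentially the same route as the paper: LDT at scales $N_0$ and $2N_0$ to produce a small exceptional set (the paper's $G$, $F$, $H$ versus your shifted union $\mathcal{B}$), verification of the three hypotheses of Theorem~\ref{Thm:APApp} on the complement, pointwise application of the Avalanche Principle with base scale $N_0$ and top scale $N$, and integration over $\T^d$ using translation invariance to collapse the shifted averages and Cauchy--Schwarz with the uniform $L^2$ bound to control the exceptional set. All of that matches.

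The one place you diverge is the treatment of the residual Avalanche Principle error $C\,L_{N_0}(A,x)\,N_0/N$, and your instinct there is sound: this term is \emph{not} exponentially small for small dyadic $N$. The paper simply absorbs $N_0/N_1$ into $e^{-C_\rho K_0^c}$ (``our restriction on $N_1$''), which is only legitimate when $N$ is comparable to the top of the allowed range $N_0 e^{\frac12\rho K_0^c}$ --- and indeed that is the only regime in which the lemma is invoked downstream (Corollary~\ref{Cor:BaseStep} and Theorem~\ref{Thm:QualIteration} take $N_1$ to be essentially the largest admissible multiple of $N_0$). Your proposed ``back-propagation'' down the dyadic ladder, however, cannot repair the small-$j$ cases: for $j=0$ the asserted inequality reads $2|L_{N_0}-L_{2N_0}| < C'(A)e^{-C_\rho K_0^c}$, and the only hypothesis available is $|L_{N_0}-L_{2N_0}| < \frac{1}{100}L_{N_0}$, which permits a discrepancy of order $\kappa \gtrsim \rho^{-1}K_0^{-c}$ --- polynomially, not exponentially, small in $K_0$. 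No amount of information about larger scales can shrink the fixed quantity $L_{N_0}-L_{2N_0}$. So you should drop the back-propagation step and instead either restrict the conclusion to $N$ near the top of the range (as the application requires) or accept that the stated bound holds with the error $C L_{N_0}(A)\,N_0/N$ retained explicitly; with that caveat your argument is complete and coincides with the paper's.
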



\begin{proof}
First, consider the set
$$G = \set{x: |L_{N_0}(A,x) - L_{N_0}(A)| > \kappa}.$$
Our large deviation theorem yields
$$|G| < e^{-C_\rho K_0^{c}}.$$ 
Thus, on $\T^d\backslash G,$
$$L_{N_0}(A,x) > 999\kappa,$$
and 
\EQ{\left|\int_G L_{N_1}(x) + L_{N_0}(x) - 2L_{2N_0}(x) dx\right| < e^{-C_\rho K_0^{c}}.}
Let us now consider a further restriction to the set, $F \subset \T^d\backslash G,$ consisting all of $x$ such that $$|L_{N_0}(A,x) - L_{N_0}(A, x + jN_0\omega)| < 2\kappa$$
for $1 \leq j \leq N_1/N_0.$ Another application of our large deviation theorem implies this set satisfies $$|\T^d \backslash F| < \frac{N_1}{N_0} e^{-C_\rho K_0^c} < e^{-C_\rho K_0^c}$$
for $N_1 < N_0e^{C_\rho K_0^c}.$ 

Finally, one last application of our large deviation theorem, applied to \eqref{eq:APHardCond}, implies the set
$$H = \set{x \in \T^d \backslash G: |L_{N_0}(x) - L_{2N_0}(x)| < \frac1{100} L_{N_0}(x)}$$
obeys the measure estimate
$$|\T^d\backslash H| < e^{-C_\rho K_0^c}.$$  

Now we see that every $x\in F \cap H$ satisfies the three hypotheses of Theorem \ref{Thm:APApp}, so on $F\cap H$ we obtain
\begin{align}\begin{split}\label{eq:FirstAPAppScale}
\bigg|L_{N_1}(A, x) + \frac{1}{n}\sum_{j = 0}^{n-1} L_{N_0}(&A, x + j N_0\omega) - \frac 2 n \sum_{j = 0}^{n - 1}L_{2N_0}(A, x + jN_0 \omega)\bigg| \\
&< \exp\left(-\frac{N_0}{4}L_{N_0}(x)\right) + C' L_{N_0}(A,x)\frac{N_0}{N_1}.
\end{split}\end{align}

We may now integrate the left hand side of the above inequality, over $\T^d:$
\EQ{\int_{\T^d} = \int_{F\cap H} + \int_{F^c \cap H} + \int_{F\cap H^c}.}
Using $L_{N_0}(x) > 999\kappa$ on $F\cap G,$ our restriction on $N_1,$ and \eqref{eq:FirstAPAppScale} we have
\EQ{\int_{F\cap H} < e^{-\frac{999}{4}N_0\kappa} + C'(A) L_{N_0}(A) e^{-C_\rho K_0^c}.}
By the uniform $L^2$-boundedness of the integrand (i.e. $\norm{L_N(A)}_2 < C(A)$) and our measure estimates on $|\T^d \backslash F|$ and $|\T^d\backslash H|,$ we have
\begin{align}
\int_{F^c\cap H} &\leq C'(A) e^{-C_\rho K_0^c}\\
\int_{F\cap H^c} &\leq C'(A) e^{-C_\rho K_0^c}.
\end{align}
Combining everything yields our result.

\end{proof}

Next, we show that $L(A) > 0$ ensures that $|L_{N_0} - L_{2N_0}|$ is always small enough.

\begin{mylemma}\label{Lemma:BaseStep}
Let $A(x)$ be an analytic quasi-periodic $M(2,\C)$-cocycle on $\T^d$ with an analytic extension to the strip $|\Im(z_j)| < \rho.$ Suppose that $\omega \in \T^d$ is such that $\norm{k\cdot \omega} > \delta_0 > 0$ for all $k\in\Z^d$ with $0 < |k| \leq K_0,$ where $K_0$ satisfies
\begin{align}
K_0 &> (\rho^{1 + c}\kappa)^{-C}\label{eq:Mixed:hyp2}\\
N_0 &> \kappa^{-C}\delta_0^{-1}K_0.\label{eq:Mixed:hyp3}
\end{align}
Moreover, suppose that $N_0 = a 2^b,$ for some $a \in \N$ and $b > -C\ln\kappa$ (here $C$ is the same as above) and
\EQ{L(A) > 10^3\kappa.} 
Then 
\begin{equation}
|L_{N_0} - L_{2N_0}| < \frac{1}{100} L_{N_0}.
\end{equation}
In the above, $C$ is a sufficiently large absolute constant depending only on $d$ which is defined in the proof and $c = C^{-1}.$
\end{mylemma}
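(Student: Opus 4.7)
The plan is to exploit the monotonicity of $L_{2^j a}(A)$ in $j$ (by subadditivity of the $N$-step Lyapunov exponents) together with a pigeonhole on the geometric decrement $L_{2^{j+1}a}/L_{2^j a}$ in order to locate an initial scale at which the hypotheses of the previous lemma are satisfied, and then to iterate that lemma to propagate control up to $N_0$.

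First I will call $j \in \{0,\dots,b\}$ a \emph{bad} scale if $L_{2^{j+1}a} < \tfrac{99}{100}L_{2^j a}$ and a \emph{good} scale otherwise. Telescoping yields
\[
\frac{L}{L_a} \;\leq\; \frac{L_{2^{b+1}a}}{L_a} \;=\; \prod_{j=0}^{b} \frac{L_{2^{j+1}a}}{L_{2^j a}} \;\leq\; \left(\frac{99}{100}\right)^{B},
\]
where $B$ denotes the number of bad scales. Hence $B \leq C\log(L_a/L) \leq C'\log\kappa^{-1}$, using $L \geq 10^3\kappa$ and the uniform bound $L_a \leq C_0(A)$. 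Restricting the count to scales $j \geq j_0 \sim \log_2(\kappa^{-C''}\delta_0^{-1}K_0/a)$ (so that $a\cdot 2^{j_0}$ exceeds the size threshold required by the previous lemma), and taking $C$ in the hypothesis $b > -C\ln\kappa$ large enough to absorb both $C'$ and $j_0$, I extract a good scale $j^{*} \in \{j_0,\dots,b\}$.

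I then apply the previous lemma at $N_{j^{*}} := a\cdot 2^{j^{*}}$: its hypotheses ($L_{N_{j^{*}}} \geq L > 10^3\kappa$, $|L_{N_{j^{*}}}-L_{2N_{j^{*}}}| < \tfrac{1}{100}L_{N_{j^{*}}}$, and the quantitative bounds on $K_0,N_{j^{*}}$) hold by construction, and the conclusion reads
\[
L_N \;=\; V^{*} + O\bigl(e^{-C_\rho K_0^c}\bigr), \qquad V^{*} := 2L_{2N_{j^{*}}} - L_{N_{j^{*}}},
\]
for every $N = 2^k N_{j^{*}}$ with $0 \leq k \leq m := \lfloor \rho K_0^c/(2\ln 2)\rfloor$. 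If $b - j^{*} + 1 \leq m$, both $L_{N_0}$ and $L_{2N_0}$ lie in this window and are within $O(e^{-C_\rho K_0^c})$ of $V^{*}$, which yields $|L_{N_0}-L_{2N_0}| \ll \kappa < L_{N_0}/100$ by the hypothesis $K_0 > (\rho^{1+c}\kappa)^{-C}$. Otherwise I bootstrap: set $N_1 := 2^m N_{j^{*}}$; then by the preceding conclusion $|L_{N_1}-L_{2N_1}| = O(e^{-C_\rho K_0^c}) \ll \tfrac{1}{100}L_{N_1}$, and $L_{N_1} \geq L > 10^3\kappa$, so the previous lemma applies again at $N_1$. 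Moreover, the new plateau value $2L_{2N_1}-L_{N_1}$ equals $V^{*}+O(e^{-C_\rho K_0^c})$, so the plateau is essentially unchanged. Iterating $\lceil (b+1-j^{*})/m\rceil$ times extends the plateau to cover $2N_0$.

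The main obstacle I anticipate is error accumulation across the bootstrap: a priori the exponentially small errors at each iteration might compound. This is resolved by the self-reproducing nature of the Avalanche-Principle-style conclusion — the plateau value $V^{*}$ is recovered up to $O(e^{-C_\rho K_0^c})$ at every step — so the errors remain $O(e^{-C_\rho K_0^c})$ independently of the number of iterations. A secondary bookkeeping issue, namely that the initial good scale must exceed the size threshold of the previous lemma, is absorbed into the constant $C$ in the hypothesis $b > -C\ln\kappa$ via the restriction to $j \geq j_0$ in the pigeonhole step.
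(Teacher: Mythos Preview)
Your approach is correct and the argument goes through, but it is organized differently from the paper's proof. Both proofs rest on the same telescoping/pigeonhole idea to locate a ``good'' dyadic scale at which the Avalanche-Principle hypothesis $|L_M - L_{2M}| < \tfrac{1}{100}L_M$ holds. The paper, however, carries this out \emph{pointwise}: it restricts to a large set $F\cap H\cap J$ of $x$ on which the large deviation bounds give approximate monotonicity $L_{N_{0s}}(x)\le L_{N_{0s'}}(x)+\kappa$, then for each such $x$ locates an $x$-dependent good scale $N_{00}(x)\in[\kappa^{C_1}N_0,\kappa N_0]$, applies Theorem~\ref{Thm:APApp} directly at that scale to bound $|L_{N_0}(x)-L_{2N_0}(x)|$, and only then integrates. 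You instead work entirely with the \emph{integrated} quantities $L_N$, where exact subadditive monotonicity $L_{2N}\le L_N$ and $L_N\ge L$ are available for free, and you recycle the previous lemma rather than re-running the pointwise Avalanche Principle. This is cleaner and more modular; the price is the bootstrap, which the paper avoids because its $x$-dependent good scale automatically lies within a factor $\kappa^{-C_1}$ of $N_0$.

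Two small points. First, your sentence about ``taking $C$ in the hypothesis $b>-C\ln\kappa$ large enough to absorb $j_0$'' is not quite the right bookkeeping: $j_0$ depends on $\delta_0^{-1}K_0/a$, not on $\kappa$ alone, so it is the \emph{other} hypothesis $N_0>\kappa^{-C}\delta_0^{-1}K_0$ (combined with the fact that the previous lemma actually only needs a smaller power $\kappa^{-C''}$) that furnishes the gap $b-j_0\gtrsim (C-C'')\log\kappa^{-1}$. Second, the bootstrap is in fact unnecessary: if you run the pigeonhole on the top window $[b-C'\log\kappa^{-1},\,b]$ instead of on $[j_0,b]$, the resulting good scale $N_{j^*}$ satisfies $N_0/N_{j^*}\le\kappa^{-C'}\ll e^{\rho K_0^{c}/2}$, so a single application of the previous lemma already covers both $N_0$ and $2N_0$. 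Your bootstrap is not wrong---as you correctly observe, each fresh application of the previous lemma yields a bound $|L_{N_i}-L_{2N_i}|\le 2C'(A)e^{-C_\rho K_0^c}$ that does not reference earlier steps, so there is no accumulation---but it can be dispensed with.
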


\begin{Remark}
The condition $N_0 = a 2^b$ is not, in fact, necessary, but allowing for more general scales requires a technical approximation argument. All we will need for applications is $N_0 = a 2^b,$ so that is what we will prove. We refer readers to \cite{PowellContinuity} Section 5 for a proof for arbitrary $N_0$ large.
\end{Remark}

\begin{proof}
First, as before, we remark that large deviation implies
\EQ{
L_{N_0}(x) > L_{N_0} - \kappa \geq L - \kappa > 999\kappa
}
away from a set, call it $G,$ of measure $|G| < e^{-C_\rho K_0^c}.$ Since $L_{N_0}(x) \in L^2,$ we have 
\EQ{\left|\int_{\T^d \cap G} L_{N_0}(x) - L_{2N_0}(x) dx\right| < Ce^{-C_\rho K_0^c}.} 
Now consider the set $F$ consisting of all $x\in \T^d\backslash G$ such that
\begin{equation}
|L_N(x) - L_N(x + j\omega)| < \kappa
\end{equation}
for all $N = 2^{-s}N_0,$ with $0 \leq s \leq s_0 := - C_1\ln\kappa,$ and $j \leq 2N_0$ such that $2^{-s}N_0 | j$ for some $j.$

Note that $K_0^{-c} < \rho^{1 + c} \kappa,$ by our assumption on $K_0.$ Thus, for all such $N,$ 
$$\set{x: |L_N(x) - L_N| > \kappa} \subset \set{x: |L_N(x) - L_N| > \rho^{-1 - c}K_0^{-c}}.$$
Moreover, 
\begin{align}
2^{-s}N_0 &> 2^{-s} \kappa^{-C} \delta_0 K_0 \\
&> \kappa^{C_1 - C} \delta_0 K_0.
\end{align}
Taking $C \geq C_1$ ensures that the right hand side is at least $\delta_0 K_0.$ All of this together implies that we may apply our large deviation to control the measure of $\T^d\backslash F$ as follows.

Large deviation, and our condition on $K_0$ implies
\begin{align}
\left|\T^d\backslash F\right| &< \sum_{s = 0}^{s_0} \sum_j \left|\set{|L_N(x) - L_N(x + j\omega)| > \kappa}\right|\\
&\leq \sum_s \frac{2N_0 2^s}{N_0} e^{-C_\rho K_0^c}\\
&\leq (2^{s_0} - 1)\frac{2N_0}{N_0} e^{-C_\rho K_0^c}\\
&\leq \kappa^{-C_1} \frac{2N_0}{N_0} e^{-C_\rho K_0^c}\\
&\leq K_0^{C_1/C} \frac{2N_0}{N_0} e^{-C_\rho K_0^c}\\
&\leq \frac{2N_0}{N_0} e^{-C_\rho K_0^c}.
\end{align}
Thus we have
$$\left|\T^d\backslash F\right| < e^{-C_\rho K_0^c} < \kappa.$$
Throughout this computation, we needed to take $K_0$ sufficiently large, depending only on $\rho, C,$ and $C_1.$ Throughout, $\rho$ is fixed, and we will later fix $C$ and $C_1$ in the proof independent of all other parameters, so this will not pose an obstacle.

Since $L_N(A,x)$ is uniformly $L^2$-bounded in $N,$ 
$$\int_{\T^d\backslash F} \left| L_{N_0}(x) - L_{2N_0}(x)\right| dx < C(A) e^{-C_\rho K_0^{c}} < C(A) \kappa.$$

Let us now define 
$$J := \set{x \in F: L_{2^{-s_0}N_0}(x) < \kappa^{-1}}.$$
By Chebyschev's inequality, $|\T^d \backslash J| < C(A) \kappa,$ and thus
$$\int_{\T^d\backslash J} \left| L_{N_0}(x) - L_{2N_0}(x)\right| dx < C(A) \kappa.$$

Now consider $x \in J.$
By definition, $L_{N_0}(x) > 999 \kappa.$ Define scales $N_{0s}$ inductively by $N_{01} = 2^{-1} N_0$ and $N_{0(s + 1)} = 2^{-1} N_{0s}.$ Since $x \in F,$ we have
\begin{align}
999\kappa N_0 \leq N_{0} L_{N_{0}}(x) &\leq \sum_{j = 0}^{N_0/N_{0s}} N_{0s} L_{N_{0s}}(x + j N_{0s}\omega) \\
&\leq \sum_{j = 0}^{N_0/N_{0s}} (N_{0s} L_{N_{0s}}(x) + N_{0s}\kappa)\\
&= N_0 L_{N_{0s}}(x) + N_0 \kappa.
\end{align}
By a similar computation, we have, for $1 \leq s < s' \leq s_0$
\begin{equation}
L_{N_{0s}}(x) \leq L_{N_{0s'}}(x) + \kappa.
\end{equation}
For each such $x,$ we define $s(x)$ and new length scales $N_{00}(x) = N_{0s(x)} = 2^{-s(x)} N_0$ such that 
\begin{align}
N_{00}(x) &\leq \kappa N_0\\
|L_{N_{00}(x)}(x) - L_{2N_{00}(x)}| &< \frac{1}{100} L_{N_{00}(x)}.
\end{align}
The first condition is achievable as long as $C_1 > 1.$ The second condition may be rewritten as
$$\frac{99}{100} L_{N_{00}(x)}(x) < L_{2N_{00}(x)} < \frac{101}{100} L_{N_{00}(x)}(x).$$
The right inequality holds by $L_{2N_{00}(x)} < L_{N_{00}(x)} + \kappa.$ The left inequality holds by taking $C_1$ sufficiently large (say $C_1 > 5$) and using $L_{2^{-s_0}N_0}(x) < \kappa^{-1}.$

The hypotheses of Theorem \ref{Thm:APApp} are now satisfied for $N_{00}(x)$ with $\delta = 100 \kappa,$ so we have
\begin{align}
\begin{split}
\bigg|L_{N}(x) &+ \frac{N_{00}}{N} \sum_{j = 0}^{\frac{N}{N_{00}} - 1}L_{N_{00}(x)}(x + jN_{00}\omega) \\
&\quad - \frac{2N_{00}}{N} \sum_{j = 0}^{\frac{N}{N_{00}}-1} L_{2N_{00}(x)}(x+ jN_{00}\omega)\bigg| \\
&< e^{-cN_{00}L_{N_{00}}(x)} + C L_{N_{00}}(A,x)\frac{N_{00}(x)}{N}
\end{split}
\end{align}
for $N = N_0, 2N_0.$ Since $x \in J,$ we have $L_{N_{00}}(x + jN_{00}\omega) = L_{N_{00}}(x) + O(\kappa)$ and $L_{N_{00}}(x) > 998\kappa.$ Moreover, $L_{N_{00}}(A,x) < L_{2^{-s_0}N_0}(A,x).$ Consequently,
$$|L_{N_0}(x) - L_{2N_0}(x)| < 10\kappa + C(A)\frac{N_{00}(x)}{N_0} + C(A) \frac{N_{00}(x)}{2N_0}.$$
Since $N_{00}(x) < \kappa N_0,$ we have
$$|L_{N_0}(x) - L_{2N_0}(x)| < C(A) \kappa.$$
Hence, for $x \in F,$ we must have
$$|L_{N_0}(x) - L_{2N_0}(x)| < \frac 1{200} L_{N_0}(x).$$
It now follows that
\EQ{
|L_{N_0} - L_{2N_0}| < \frac 1{200} L_{N_0} + C(A)e^{-C_\rho K_0^c} < \frac 1{100} L_{N_0}.}
\end{proof}

The previous two lemmas immediately imply the following.
\begin{mythm}\label{Thm:BaseStep}
Let $A(x)$ be an analytic quasi-periodic $M(2,\C)$-cocycle on $\T^d$ with a plurisubharmonic extension to the strip $|\Im(z_j)| < \rho$ such that $\det A(x) \not\equiv 0.$ Suppose that $\omega \in \T^d$ is such that $\norm{k\cdot \omega} > \delta_0 > 0$ for all $k\in\Z^d$ with $0 < |k| \leq K_0,$ where $K_0$ satisfies
\begin{align}
K_0 &> (\rho^{1 + c}\kappa)^{-C}\label{eq:Mixed:hyp2}\\
N_0 &> \kappa^{-C}\delta_0^{-1}K_0.\label{eq:Mixed:hyp3}
\end{align}
Moreover, suppose that $N_0 = a 2^b,$ for some $a \in \N$ and $b > -C\ln\kappa$ (here $C$ is the same as above) and
\begin{align}
L(A) &> 10^3\kappa.
\end{align} 
Then 
\begin{equation}
|L_N + L_{N_0} - 2L_{N_0}| < C(A)e^{-C_\rho K_0^c}
\end{equation}
when $N_0 | N, N = 2^jN_0$ for some $j \geq 0,$ and 
\begin{equation}
N < N_0 e^{C_\rho K_0^c}.
\end{equation} 
\end{mythm}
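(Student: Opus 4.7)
The plan is to obtain Theorem \ref{Thm:BaseStep} by a direct chaining of the two preceding lemmas; indeed, the theorem is essentially a repackaging of those results. The first lemma produces the desired exponential bound on $|L_N + L_{N_0} - 2L_{2N_0}|$ under hypotheses that match those of Theorem \ref{Thm:BaseStep} except for two differences: it requires a lower bound on the finite-scale exponent $L_{N_0}(A) > 10^3\kappa$ rather than on the limiting exponent $L(A)$, and it additionally requires the near-coincidence condition $|L_{N_0} - L_{2N_0}| < \frac{1}{100} L_{N_0}$ at scale $N_0$. My task is to verify that, under the stronger hypotheses of the theorem, both of these preconditions automatically hold.

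To promote $L(A) > 10^3\kappa$ to $L_{N_0}(A) > 10^3\kappa$, I would appeal to the subadditivity of $N \mapsto N L_N'(A,\omega) = \int_{\T^d}\ln\|A_N(x,\omega)\|\,dx$. After the determinant renormalization built into $L_N$ (recalled in Section \ref{Section:FirstSection}), subadditivity yields the monotone convergence $L_{N_0}(A)\searrow L(A)$, and in particular $L_{N_0}(A) \geq L(A) > 10^3\kappa$ at every scale $N_0$. To verify the near-coincidence condition, I would invoke Lemma \ref{Lemma:BaseStep}: its hypotheses match those of Theorem \ref{Thm:BaseStep} exactly, including the scale factorization $N_0 = a 2^b$ with $b > -C\ln\kappa$ and the same quantitative conditions on $K_0$ and $N_0$, and its conclusion is precisely $|L_{N_0} - L_{2N_0}| < \frac{1}{100}L_{N_0}$.

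With both preconditions of the first lemma now in hand, its conclusion is exactly the statement of Theorem \ref{Thm:BaseStep} (the $2L_{N_0}$ in the displayed conclusion being a typographical slip for $2L_{2N_0}$). I do not expect any substantive obstacle; the only bookkeeping is to note that the stronger assumption $K_0 > (\rho^{1+c}\kappa)^{-C}$ of the theorem implies the weaker $K_0 > (\rho\kappa)^{-C}$ required by the first lemma, so that the hypothesis sets chain together without loss.
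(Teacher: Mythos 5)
Your proposal is correct and matches the paper's approach: the paper states that the theorem follows immediately from the two preceding lemmas, which is exactly the chaining you describe (using subadditivity to get $L_{N_0}\ge L>10^3\kappa$, Lemma \ref{Lemma:BaseStep} to supply the near-coincidence condition \eqref{eq:APHardCond}, and then the first lemma for the conclusion).
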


Elementary iteration allows us to extend this result from $N < N_0e^{C_\rho K_0^c}$ to $N < \exp(\exp(C_\rho K_0^c)).$ This formulation will be used to establish weak-H\"older continuity in frequency.

\begin{mycor}\label{Cor:BaseStep}
Let $A(x)$ be an analytic quasi-periodic $M(2,\C)$-cocycle on $\T^d$ with a plurisubharmonic extension to the strip $|\Im(z_j)| < \rho.$ Suppose that $\omega \in \T^d$ is such that $\norm{k\cdot \omega} > \delta_0 > 0$ for all $k\in\Z^d$ with $0 < |k| \leq K_0,$ where $K_0$ satisfies
\begin{align}
K_0 &> (\rho^{1 + c}\kappa)^{-C}\label{eq:Mixed:hyp2}\\
N_0 &> \kappa^{-C}\delta_0^{-1}K_0.\label{eq:Mixed:hyp3}
\end{align}
Moreover, suppose that $N_0 = a 2^b,$ for some $a \in \N$ and $b > -C\ln\kappa$ (here $C$ is the same as above) and
\begin{align}
L(A) &> 10^3\kappa.
\end{align} 
Then 
\begin{equation}
|L_N + L_{N_0} - 2L_{N_0}| < C(A)e^{-\rho K_0^c}
\end{equation}
when $N_0 | N, N = 2^jN_0$ for some $j \geq 0,$ and 
\begin{equation}
N < \exp(\exp(\frac 12 \rho K_0^c)).
\end{equation} 
\end{mycor}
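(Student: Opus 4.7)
The plan is to iterate Theorem \ref{Thm:BaseStep} a finite but large number of times, each application advancing the base scale to near the top of the preceding window. Set $E_0 := C(A) e^{-C_\rho K_0^c}$ and $\ell_0 := 2 L_{2 N_0} - L_{N_0}$, so Theorem \ref{Thm:BaseStep} reads $|L_N - \ell_0| \leq E_0$ for dyadic multiples $N = 2^j N_0 < N_0 e^{C_\rho K_0^c}$.

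I define $M_0 := N_0$ and $M_i := 2^{j_i} M_{i-1}$ inductively, with $j_i$ the largest integer such that $2 M_i \leq M_{i-1} e^{C_\rho K_0^c}$; this forces $2^{j_i} \geq e^{C_\rho K_0^c}/4$. At each stage I verify that Theorem \ref{Thm:BaseStep} remains applicable with $M_i$ as the base scale: the Diophantine condition on $\omega$ and the positivity $L(A) > 10^3 \kappa$ are properties of $(\omega, A)$ and are independent of the base; the lower bound $M_i > \kappa^{-C} \delta_0^{-1} K_0$ is inherited from $M_i \geq N_0$; and the dyadic form $M_i = a \cdot 2^{b + j_1 + \cdots + j_i}$ has exponent still exceeding $b > -C \ln \kappa$. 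The theorem then yields $|L_N - \ell_i| \leq E_0$ for dyadic $N < M_i e^{C_\rho K_0^c}$, where $\ell_i := 2 L_{2 M_i} - L_{M_i}$. Since $M_i$ and $2 M_i$ both lie in the step $(i-1)$ window by construction, the induction hypothesis gives $|L_{M_i} - \ell_{i-1}|, |L_{2 M_i} - \ell_{i-1}| \leq E_0$, whence $|\ell_i - \ell_{i-1}| \leq 3 E_0$.

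Telescoping across $k$ iterations then produces $|L_N - \ell_0| \leq (3k+1) E_0$ valid for all dyadic $N \leq M_k e^{C_\rho K_0^c}$, and the construction gives $M_k \geq N_0 (e^{C_\rho K_0^c}/4)^k$. Choosing $k$ of order $\exp(\tfrac{1}{2} \rho K_0^c)/(C_\rho K_0^c)$ pushes the final window past $\exp(\exp(\tfrac{1}{2} \rho K_0^c))$. Provided the constant $C_\rho$ in Theorem \ref{Thm:BaseStep} is at least $\tfrac{3}{2} \rho$ (which can be ensured by absorbing an absolute constant into the implicit constants there), one has $3k + 1 \leq e^{(C_\rho - \rho) K_0^c}$, so the accumulated error $(3k+1) E_0$ remains bounded by $C(A) e^{-\rho K_0^c}$, which is the conclusion of the corollary.

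The main obstacle, and the reason the corollary's error exponent is relaxed from $C_\rho K_0^c$ to the weaker $\rho K_0^c$, is balancing the linear-in-$k$ error degradation against the single-exponential-in-$k$ window expansion: reaching a doubly exponential window forces $k$ itself to be exponentially large in $K_0^c$, and the gap $C_\rho - \rho > 0$ is precisely the budget that keeps the total error from swamping the gain. Verification of the hypotheses at each iterative step is otherwise routine, since every constraint either depends only on $(\omega, A)$ or relaxes as the base scale $M_i$ grows.
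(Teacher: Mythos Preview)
Your approach is correct and is precisely the ``elementary iteration'' the paper invokes but does not spell out: fix the Diophantine parameters $(K_0,\delta_0,\kappa)$, slide the base scale up dyadically through the window produced at each step of Theorem~\ref{Thm:BaseStep}, and telescope the resulting linear combinations $2L_{2M_i}-L_{M_i}$. Your hypothesis check is accurate (the constraints depend only on $(\omega,A,K_0,\delta_0,\kappa)$ or loosen as the base scale grows), and the arithmetic balancing the linear error growth against the single-exponential window expansion is exactly the reason the exponent in the corollary is relaxed from $C_\rho K_0^c$ to $\rho K_0^c$; the paper's convention that constants such as $C_\rho$ may absorb fixed multiplicative factors justifies your assumption $C_\rho\geq\tfrac{3}{2}\rho$.
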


By imposing a Diophantine condition on $\omega,$ we are able to perform a more delicate iteration scheme to extend this result to the limit.

\begin{mythm}\label{Thm:QualIteration}
Let $A(x)$ be an analytic quasi-periodic $M(2,\C)$-cocycle on $\T^d$ with a plurisubharmonic extension to the strip $|\Im(z_j)| < \rho.$ Suppose that $\omega \in \T^d$ is such that $\norm{k\cdot \omega} > \tau |k|^{-\sigma} > 0$ for all $k\in\Z^d$ with $0 < |k|.$ Then for every $N_0 = a2^b > N',$ we have, for $\eta = \eta(d,\sigma),$
\begin{equation}
|L + L_{N_0} - 2L_{2N_0}| < C(A) \exp\set{-C_\rho N_0^{\eta}}.
\end{equation}
Here $C$ is the same sufficiently large absolute constant from Lemma \ref{Lemma:BaseStep} and $c = C^{-1}$ above. Here $N'$ is a constant defined in our proof such that $N_0$ satisfies the divisibility criterion needed to appeal to Lemma \ref{Lemma:BaseStep}.
\end{mythm}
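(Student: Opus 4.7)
I would upgrade Corollary~\ref{Cor:BaseStep}, which pins $L_N$ to the constant $E_0 := 2L_{2N_0} - L_{N_0}$ only on a doubly-exponential window of scales, to the same comparison with $L = \lim_{N\to\infty} L_N$. The strategy is to iterate through a sequence of scales $N_0 < N_1 < N_2 < \cdots \to \infty$, define $E_j := 2L_{2N_j} - L_{N_j}$, and show that each successive difference $|E_{j+1} - E_j|$ is controlled by an exponentially small $\epsilon_j$. Since $L_{N_j}, L_{2N_j} \to L$ by definition of the Lyapunov exponent, $E_j \to L$, so $L - E_0 = \sum_{j \geq 0}(E_{j+1} - E_j)$, and the claimed estimate reduces to bounding this sum.

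\textbf{Scale and Diophantine bookkeeping.} Fix $\kappa$ with $L(A) > 10^3\kappa$. The Diophantine condition $\norm{k\cdot\omega} > \tau |k|^{-\sigma}$ lets me set $\delta_j = \tau K_j^{-\sigma}$ at scale $N_j$, so the hypothesis \eqref{eq:Mixed:hyp3} becomes $N_j > \kappa^{-C}\tau^{-1}K_j^{\sigma+1}$. I choose $K_j$ by saturating this inequality (up to a fixed constant), giving $K_j \sim (\kappa^C \tau N_j)^{1/(\sigma+1)}$. Taking $N_0 > N'$ for a threshold $N' = N'(\rho, \kappa, \tau, \sigma, d)$ ensures both \eqref{eq:Mixed:hyp2} and \eqref{eq:Mixed:hyp3} at $j = 0$, and these are preserved inductively since $N_j$ is increasing. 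I then pick $N_{j+1}$ as a power-of-two multiple of $N_j$ inside $[N_j e^{\frac{1}{4}\rho K_j^c}, N_j e^{\frac{1}{2}\rho K_j^c}]$, which preserves the factorization $N_j = a \cdot 2^{b_j}$ with $b_j > -C\ln\kappa$ required by Lemma~\ref{Lemma:BaseStep}.

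\textbf{Telescope and main obstacle.} Corollary~\ref{Cor:BaseStep} applied at scale $N_j$ then yields
\[
|L_N + L_{N_j} - 2L_{2N_j}| < C(A) e^{-\rho K_j^c} =: \epsilon_j
\]
for every power-of-two multiple $N$ of $N_j$ with $N \leq \exp(\exp(\tfrac{1}{2}\rho K_j^c))$. Taking $N = N_{j+1}$ and $N = 2N_{j+1}$ and combining gives $|E_{j+1} - E_j| \leq 3\epsilon_j$. Because $K_{j+1}$ is super-exponential in $K_j$ (since $N_{j+1}$ is super-exponential in $N_j$), the series $\sum_j \epsilon_j$ is dominated by its first term up to a uniform factor, so $|L - E_0| \leq C(A)\epsilon_0 \leq C(A)\exp(-C_\rho N_0^{c/(\sigma+1)})$, which is the desired bound with $\eta := c/(\sigma+1)$. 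The main obstacle will be the careful bookkeeping across iterations: verifying that the choice $K_j \sim N_j^{1/(\sigma+1)}$ continues to satisfy \eqref{eq:Mixed:hyp2}--\eqref{eq:Mixed:hyp3}, that the doubly-exponential window at scale $j$ reaches the chosen $N_{j+1}$, and that the divisibility hypothesis of Lemma~\ref{Lemma:BaseStep} is maintained. Absorbing the small losses from each iteration into universal constants $C, c, C_\rho$ while keeping the exponent $\eta$ nontrivial is the technical crux.
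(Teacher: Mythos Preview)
Your proposal is correct and follows essentially the same inductive scheme as the paper: choose a sequence of scales $N_0<N_1<\cdots$ growing (super)exponentially, apply Theorem~\ref{Thm:BaseStep} at each scale to pin $L_{N_{j+1}}$ and $L_{2N_{j+1}}$ to $2L_{2N_j}-L_{N_j}$, telescope, and sum the resulting rapidly decaying errors. The only cosmetic difference is that you keep $\kappa$ fixed and take $K_j\sim N_j^{1/(\sigma+1)}$, whereas the paper couples them via $\kappa_s=\kappa_{s-1}^2$, $K_s=\kappa_s^{-C}$ (yielding $\eta=c/(\sigma+2)$ instead of your $c/(\sigma+1)$); both choices satisfy \eqref{eq:Mixed:hyp2}--\eqref{eq:Mixed:hyp3} and the divisibility constraint, so either works.
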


\begin{proof}
For simplicity, we will assume $\rho = 1,$ though the argument and result hold for general $\rho,$ we would just have to adjust $N_s, K_s,$ and $\kappa_s$ accordingly. We will begin by applying Theorem \ref{Thm:BaseStep}. We will define the appropriate parameters first. Fix $c, C, C_1$ as in Theorem \ref{Thm:BaseStep}, and let $K = K(c, C, C_1, \sigma)$ be large enough such that
\begin{equation}
K^{C_1/C} e^{-K^{c}} < e^{-C_\rho K^c}
\end{equation}
and
\begin{equation}
e^{C_\rho K^{c/2}} > 2K^{1 + \sigma/2}.
\end{equation}
The first condition ensures that every $K_0 > K$ satisfies the largeness condition we imposed during the proof of Lemma \ref{Lemma:BaseStep}, while the second condition will ensure that our base step can be iterated (see below for details).

Take any $N_0 = 2^b > \tau^{-1} K^{\sigma + 2} = N'.$ We now define the rest of our base parameters. Define $K_0, \kappa_0,$ and $\delta_0$ via 
\begin{align}
N_0 &=: \tau^{-1} K_0^{\sigma + 2},\\
K \leq K_0 &=: \kappa_0^{-C},\\
\delta_0 &:= \tau K_0^{-\sigma}.
\end{align} 
Lemma \ref{Lemma:BaseStep} is applicable with these parameters, and we have
$$\left|L_{N_1} + L_{N_0} - 2L_{N_0}\right| < C(A)e^{-C_\rho K_0^c}$$
for $N_0 | N_1, N_1 = 2^j N_0 < N_0e^{C_\rho K_0^c}.$

Now define 
\begin{align}
\kappa_1 &= \kappa_0^2,\\
K_1 = \kappa_1^{-C} &= \kappa_0^{-2C} = K_0^2,
\end{align}
and 
\EQ{
\delta_1 = \tau K_1^{-\sigma}.
}
Let $N_1 = N_0 e^{\frac 12 K_0^c},$ where this is meant as: $N_1$ is the largest integer multiple of $N_0$ no greater than $N_0 e^{C_\rho K_0^c}.$ 
We clearly have $N_1 \geq \frac12 N_0 e^{C_\rho K_0^c}.$ Moreover, by our choice of $\kappa_1, K_1,$ and $\delta_1,$ we have
\begin{align}
N_1 &\geq \frac 12 \kappa_0^{-C(\sigma + 2)} e^{C_\rho K_1^{c/2}}\\
&= \frac 12 \kappa_1^{-C} \kappa_1^{-C\sigma/2} e^{C_\rho K_1^{c/2}}\\
&= \frac 12 \tau^{-1/2}\kappa_1^{-C} \delta_1^{-1/2} e^{C_\rho K_1^{c/2}}\\
&\geq  \kappa_1^{-C} \delta_1^{-1} K_1,
\end{align}
for $K_1$ sufficiently large. In fact, we just need $K_0$ sufficiently large so that $e^{C_\rho K_0^{c/2}} > 2 K_0^{1 + \sigma/2},$ which is certainly satisfied by our initial choice of $K.$ 

Consequently, Theorem \ref{Thm:BaseStep} is applicable with $\kappa_1, K_1, \delta_1,$ and $N_1$ as above:
\begin{align}
|L_{N_2} + L_{N_1} - 2L_{2N_1}| &< C(A) e^{-C_\rho K_1^c}\\
&= C(A) e^{-C_\rho K_0^{2c}},\\
\end{align}
for $N_1 | N_2, N_2 = 2^j N_1 < N_1 e^{C_\rho K_1^c}.$

Now we define 
\begin{align}
\kappa_2 &:= \kappa_1^2 = \kappa_0^4,\\
K_2 &:= \kappa_2^{C} = \kappa_0^{4C},\\
\delta_2 &:= \tau K_2^{-\sigma},
\end{align}
and
\EQ{N_2 := N_1 e^{\frac 12 K_1^c} \geq \frac 12 N_1 e^{C_\rho K_1^c}.}
Clearly, by our choice of $\kappa_2, K_2, \delta_2,$ and what we know of $N_1,$ we have
\begin{align}
N_2 &\geq \frac 12 \kappa_1^{-C} \delta_1^{-1} K_1 e^{C_\rho K_1^c}\\
&= \frac 12 \tau^{-1} \kappa_2^{-C/2} K_2^{\sigma/2 + 1/2} e^{C_\rho K_2^{c/2}}\\
&= \frac 12 \tau^{-1/2} \kappa_2^{-C/2} \delta_2^{-1/2} K_2^{1/2}e^{C_\rho K_2^{c/2}}\\
&\geq \kappa_2^{-C} \delta_2^{-1} K_2
\end{align}
whenever $K_2$ satisfies $2K_2^{1 + \sigma/2}  \leq e^{C_\rho K_2^{c/2}}.$ It is easy to see that our original condition on $K$ ensures that $K_2$ is large enough.

We may now apply Lemma \ref{Lemma:BaseStep} with $\kappa_2, K_2, \delta_2,$ and $N_2$ as above to obtain
\EQ{|L_{N_3} + L_{N_2} - 2L_{2N_2}| < C(A)e^{-C_\rho K_2^c} < C(A) e^{-C_\rho K_0^{4c}}.}

Continuing in this way, we inductively define 
\begin{align}
\kappa_s &:= \kappa_{s - 1}^2, \\
K_s &:= \kappa_s^{-C}, \\
\delta_s &:= \tau K_s^{-\sigma},
\end{align} 
and 
\EQ{N_s := N_{s - 1}e^{\frac 12 K_{s - 1}^c} \geq \frac 12 N_{s - 1}e^{C_\rho K_{s - 1}^c}.}
Moreover, $N_{s - 1} \geq \kappa_{s - 1}^{-C} \delta_{s - 1}^{-1} K_{s - 1},$ and $e^{C_\rho K_{s - 1}^{c/2}} > 2 \tau^{1/2} K_{s - 1}^{1 + C/2}.$ Following the same argument as before, we thus have
\begin{align}
N_s &\geq \frac 12 \kappa_{s - 1}^{-C} \delta_{s - 1}^{-1} K_{s - 1} e^{C_\rho K_{s - 1}^c}\\
&= \frac 12 \tau^{-1} \kappa_{s}^{-C/2} \delta_{s - 1}^{-1} K_{s - 1} e^{C_\rho K_{s - 1}^c}\\
&= \frac12 \tau^{-1/2} \kappa_s^{-C/2} \delta_s^{-1/2} K_{s}^{1/2} e^{C_\rho K_s^{c/2}} \\
&\geq \kappa_s^{-C} \delta_s^{-1} K_s.
\end{align}
It follows that 
\EQ{|L_{N_{s}} + L_{N_{s - 1}}- 2L_{2N_{s-1}}| < C(A) e^{-C_\rho K_0^{2^{s-1}c}}.}

At this point, we note that, at each step, we could have used $2N_s,$ rather than $N_s,$ and obtained identical inequalities. Combining this fact with the established inequalities, we obtain
\EQ{
\left|L_{N_s} + L_{N_0} - 2 L_{2N_0}\right| < C(A) \sum_{j = 0}^{s - 1}e^{-C_\rho K_0^{2^jc}}.
}
Altogether, this yields
\begin{align}
|L_{N_s} + L_{N_0} - 2L_{2N_0}| &< \sum_{j = 0}^{\infty} C(A) e^{-C_\rho K_0^{2^jc}} \\
&\leq 2C(A) e^{-C_\rho K_0^c}.
\end{align}
Taking $s \to \infty,$ we obtain
\begin{equation}
|L + L_{N_0} - 2L_{2N_0}| \leq 2 C(A) e^{-C_\rho K_0^c}.
\end{equation}
Now, using our initial choice of $K_0,$ we finally obtain
\EQ{
|L + L_{N_0} - 2L_{2N_0}| \leq C(A) e^{-C_\rho N_0^{c/(\sigma + 2)}}
}
as desired, with $\eta = c(d)/(\sigma + 2).$

\end{proof}

\section{Weak-H\:older continuity of the Lyapunov exponent}\label{Section:Cont}
We begin this section by combining Theorem \ref{Thm:QualIteration} and Theorem \ref{Thm:FiniteCont} to obtain weak-H\"older continuity in $A.$ Then we will show that weak-H\"older continuity in $\omega$ follows from Theorem \ref{Thm:QualIteration} and the continuity of the Lyapunov exponent in $\omega.$

\begin{mythm}\label{Thm:CocycleCont}
Let $A(x)$ be an analytic quasi-periodic $M(2,\C)$-cocycle on $\T^d$ with an analytic extension to the strip $|\Im(z_j)| < \rho.$ Suppose that $\omega \in \T^d$ is such that $\norm{k\cdot \omega} > \tau |k|^{-\sigma} > 0$ for all $k\in\Z^d$ with $0 < |k|.$ Then there is $\delta = \delta(A)$ and $\gamma = \gamma(d,\sigma) \leq 1,$ such that, for every $\norm{A - B} < \delta,$
\begin{equation}
|L(A,\omega) - L(B,\omega)| < C(A) \exp\left(-c(A)\left(-\ln\norm{A - B}\right)^{\gamma}\right).
\end{equation}
\end{mythm}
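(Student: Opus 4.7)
The plan is to invoke Theorem \ref{Thm:QualIteration} to replace $L(A,\omega)$ and $L(B,\omega)$ by a fixed linear combination of finite-scale exponents at a common scale $N_0$, and then to control those finite-scale differences via Theorem \ref{Thm:FiniteCont}. The scale $N_0$ will be chosen in terms of $\norm{A-B}$ to balance the two sources of error.

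First, I shrink $\delta(A)$ enough that, for every $B$ with $\norm{A - B} < \delta$, the positivity margin $\kappa$, the threshold $N'$, and all implicit constants in Theorem \ref{Thm:QualIteration} remain valid simultaneously for $A$ and for $B$; this is legitimate because $L$ is already known to be continuous in the cocycle (from \cite{PowellContinuity}), so $L(B,\omega) > \tfrac{1}{2}L(A,\omega) > 0$ on a small neighborhood. Picking any admissible $N_0 = a \cdot 2^b > N'$ and applying Theorem \ref{Thm:QualIteration} to both $A$ and $B$ yields
\begin{align*}
|L(A,\omega) - (2L_{2N_0}(A,\omega) - L_{N_0}(A,\omega))| &\leq C(A)\exp(-C_\rho N_0^\eta), \\
|L(B,\omega) - (2L_{2N_0}(B,\omega) - L_{N_0}(B,\omega))| &\leq C(A)\exp(-C_\rho N_0^\eta).
\end{align*}
Subtracting these and applying the triangle inequality reduces matters to estimating $|L_{N}(A,\omega) - L_N(B,\omega)|$ for $N \in \{N_0, 2N_0\}$, for which Theorem \ref{Thm:FiniteCont} (with $\omega = \omega'$) provides, for any $\alpha > 1$,
\begin{equation*}
|L_N(A,\omega) - L_N(B,\omega)| \leq C(A) e^{-2N^{\alpha-1} b(A)} + e^{2N^\alpha}\norm{A-B}.
\end{equation*}

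It remains to optimize the scale $N_0$. Choosing $N_0^\alpha \approx \tfrac{1}{4}|\ln\norm{A-B}|$, rounded down to the nearest integer of the form $a \cdot 2^b$ (a harmless loss of a factor of two), the telescoping term $e^{2N_0^\alpha}\norm{A-B}$ is bounded by $\norm{A-B}^{1/2}$ and is dominated by the remaining errors
\begin{equation*}
\exp(-C_\rho N_0^\eta) \sim \exp(-c|\ln\norm{A-B}|^{\eta/\alpha}) \quad \text{and} \quad \exp(-2b(A) N_0^{\alpha - 1}) \sim \exp(-c|\ln\norm{A-B}|^{(\alpha-1)/\alpha}).
\end{equation*}
Taking $\alpha = 1 + \eta$ equalizes the two exponents at $\gamma = \eta/(1 + \eta) \leq 1$, producing the advertised bound $C(A)\exp(-c(A)(-\ln\norm{A-B})^\gamma)$. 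The only delicate point in the whole argument is the uniformity of the constants in Theorem \ref{Thm:QualIteration} when applied to $B$ rather than $A$, but once the positivity margin is transferred from $A$ to $B$ using prior continuity, the rest is a balancing exercise between an exponentially decaying (in a power of $N_0$) iteration error and the finite-scale estimate.
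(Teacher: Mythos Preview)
Your proposal is correct and follows essentially the same approach as the paper: decompose $|L(A)-L(B)|$ via the triangle inequality into two applications of Theorem \ref{Thm:QualIteration} plus finite-scale differences controlled by Theorem \ref{Thm:FiniteCont}, then optimize the scale $N_0$ as a power of $-\ln\norm{A-B}$ and set $\alpha = 1+\eta$. If anything, you are slightly more careful than the paper in explicitly invoking prior continuity of $L$ to transfer the positivity hypothesis $L(B,\omega)>0$ needed to apply Theorem \ref{Thm:QualIteration} to $B$; the paper only comments on uniformity of the constants $C(A),C(B)$ without spelling this out.
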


\begin{proof}
Since this result involves a fixed frequency, we will omit it in our notation. Consider $\delta(A) > 0$ such that Theorem \ref{Thm:FiniteCont} applies for $\norm{A - B} < \delta.$ Then we have, for every $N = a2^b > N',$ where $N'$ is the same absolute constant from Theorem \ref{Thm:QualIteration}, and every $\alpha > 1,$
\begin{align*}
|L(A) - L(B)| &\leq |L(A) + L_N(A) - 2L_{2N}(A)|\\
&\quad + |L_N(B) + L_N(A)| + 2|L_{2N}(B) - L_{2N}(A)|\\
&\quad + |L(B) + L_N(B) - 2L_{2N}(B)|\\
&\leq C(A)e^{-C_\rho N^\eta}\\
&\quad + C(A)e^{-2N^{\alpha - 1}b(A)} + e^{N^\alpha} \norm{A - B}\\
&\quad + C(B)e^{-2N^{\alpha - 1}b(A)} + e^{N^\alpha} \norm{A - B}\\
&\quad + C(B)e^{-C_\rho N^\eta}.
\end{align*}
Moreover, the constants $C(A), C(B)$ in the above depend on uniform measurements of $A$ and $B.$ In particular, the $L^2$-norms and constants from the Lojaciewicz inequality. Since $A$ and $B$ are close in norm, these constants are close, and we can, in fact, control them all by $2C(A).$ 

At this point, we set $N = \left(-\ln\norm{A - B}\right)^\beta,$ for some $\beta$ to be defined later. This yields
\begin{align*}
|L(A) - L(B)| &\leq C(A)(e^{-C_\rho (-\ln\norm{A - B})^{\beta\eta}}
 + e^{-2(-\ln\norm{A - B})^{\beta (\alpha - 1)}b(A)})\\
  &\quad + e^{(-\ln\norm{A - B})^{\beta\alpha}} \norm{A - B}.
\end{align*}

Now, if $\beta < 1/\alpha,$ then the $\norm{A - B}$ term above is bounded by $\norm{A - B}^{1/2},$ and is thus irrelevant, so we will take $\beta < 1/\alpha.$ We have
\begin{align*}
|L(A) - L(B)| &\leq C(A)e^{-C_\rho (-\ln\norm{A - B})^{\beta\eta}}\\
&\quad + C(A)e^{-2(-\ln\norm{A - B})^{\beta (\alpha - 1)}b(A)}.
\end{align*}

Finally, setting $\alpha = \eta + 1,$ we obtain our desired result with $\gamma = \beta \eta.$
\end{proof}

Weak-H\"older continuity in the frequency, at Diophantine frequencies, requires an additional estimate from our proof that the Lyapunov exponent is continuous at the frequencies with rationally independent components. We recall that below without proof.

\begin{mylemma}[\cite{PowellContinuity} Lemma 7.1]\label{Lem:FinalLemma}
Let $A(x)$ be an analytic quasi-periodic $M(2,\C)$-cocycle on $\T^d$ with an analytic extension to the strip $|\Im(z_j)| < \rho.$  Assume $\omega \in \T^{d}$ such that
$$\norm{k\cdot\omega} > \delta \quad \text{for } k\in \Z^{d_2}, 0 < |k| \leq K.$$
Moreover, suppose $\rho > K^{-c}.$
Then
$$|L_N - L| < K^{-c}$$
for all $N > K^2/\delta.$ Here $c = c(d) < 1$ is the same constant from before. 
\end{mylemma}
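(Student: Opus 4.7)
The plan is to combine subadditivity of the Lyapunov exponent with a quantitative Weyl-type equidistribution estimate for Birkhoff sums of the plurisubharmonic function $L_N$, leveraging both the finite Diophantine condition and the hypothesis $\rho > K^{-c}$ on the strip width.

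First I would note that, by submultiplicativity $\ln\|A_{MN}(x)\| \leq \sum_{j=0}^{M-1}\ln\|A_N(x+jN\omega)\|$ and integration, one has $L \leq L_N$ for every $N$, so the content of the lemma is the reverse inequality $L_N \leq L + K^{-c}$. To access this, I would exploit the plurisubharmonic extension of $L_N(x)$ to the strip $|\Im z_j|<\rho$, which together with the uniform $L^p$-boundedness of $L_N$ established in Section~\ref{Section:FirstSection} gives (via the Riesz representation for subharmonic functions) a Fourier decay $|\widehat{L_N}(k)| \leq C(A)(\rho|k|)^{-1}$ uniformly in $N$. For any integer $M$, Fourier-expanding the Birkhoff sum gives
\begin{equation*}
\frac{1}{M}\sum_{j=0}^{M-1}L_N(x+j\omega) - L_N = \sum_{k\neq 0}\widehat{L_N}(k)\,e^{2\pi i k\cdot x}\cdot\frac{1}{M}\sum_{j=0}^{M-1}e^{2\pi i jk\cdot\omega},
\end{equation*}
whose inner geometric sum is bounded by $\min(1,(M\|k\cdot\omega\|)^{-1})$. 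Splitting the frequency sum at $|k|=K$: for $|k|\leq K$ the Diophantine hypothesis gives a contribution of $O((\rho M\delta)^{-1}\log K)$; for $|k|>K$ the Fourier tail is $O((\rho K)^{-1})$. Taking $M>K^2/\delta$ and invoking $\rho>K^{-c}$ renders both terms $O(K^{-c})$, yielding the Birkhoff-approximation estimate $\bigl|\frac{1}{M}\sum_{j=0}^{M-1}L_N(x+j\omega) - L_N\bigr| < K^{-c}$ uniformly in $x$, off an exceptional set of exponentially small measure controlled by Theorem~\ref{Thm:LDT}.

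The last step is to close the loop between this Birkhoff bound and $L$. Pairing the subadditive inequality $L_{MN}(x)\leq \frac{1}{M}\sum_{j=0}^{M-1} L_N(x+jN\omega)$ with the Birkhoff estimate gives $L_{MN} \leq L_N + K^{-c}$, which alone only reproduces $L\leq L_N$. The \emph{main obstacle} is therefore the reverse direction $L_N \leq L + K^{-c}$, since no positivity of $L$ is available to invoke the avalanche principle (Theorem~\ref{Thm:APApp}) as we did in Section~\ref{Section:Firstbasestep}. To bypass this, I would follow the approach of \cite{PowellContinuity}: use the $2\times 2$ matrix inequality $\ln\|A_N(x)\| \geq \tfrac{1}{2}\ln|\det A_N(x)|$, combined with the uniform Lojasiewicz inequality from Lemma~\ref{Lem:UnifLoj} applied to $\det A$, to obtain a quantitative two-sided control on $L_N(x)$ on a set of measure $1-O(K^{-c})$. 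Feeding this two-sided control into the Fourier-analytic step above, together with Kingman's a.e.\ convergence $L_{MN}(x)\to L$, forces $L_N$ to sit within $K^{-c}$ of $L$ and completes the proof.
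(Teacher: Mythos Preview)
The paper does not prove this lemma: it is quoted from \cite{PowellContinuity} with the explicit preface ``We recall that below without proof.'' So there is no proof in the present manuscript to compare your attempt against. Nevertheless, your sketch has a real gap in its closing paragraph.

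Your Fourier/plurisubharmonic derivation of the Birkhoff-approximation bound is the standard route to the large-deviation estimate and is essentially Theorem~\ref{Thm:LDT} here; that part is fine. You also correctly isolate the crux: subadditivity already gives $L\leq L_N$, so the content is $L_N\leq L+K^{-c}$. But neither of the two tools you invoke delivers this. The determinant inequality $\ln\|A_N(x)\|\geq\tfrac12\ln|\det A_N(x)|$ only produces a lower bound on $L_N'(x)$ in terms of $\frac{1}{2N}\sum_j\ln|\det A(x+j\omega)|$, whose integral is $\tfrac12\int\ln|\det A|$; this is exactly what controls the gap between $L_N$ and $L_N'$ (and underlies the uniform $L^2$-bound in Section~\ref{Section:FirstSection}), not the gap between $L_N$ and $L$. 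And Kingman's theorem is purely qualitative: it gives $L_{MN}(x)\to L$ a.e.\ with no rate whatsoever, so it cannot manufacture the quantitative bound $K^{-c}$. Combining the LDT at scales $2^jN$ with Kingman yields only $\limsup_j|L_{2^jN}-L|\leq K^{-c}$, which is vacuous since $L_{2^jN}\to L$ anyway; it says nothing about the \emph{fixed} scale $N$. The sentence ``forces $L_N$ to sit within $K^{-c}$ of $L$'' is precisely the step that needs an argument, and as written there is none.

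The proof in \cite{PowellContinuity} (following Bourgain's multifrequency continuity argument in \cite{BourgainContinuity}) does not proceed via Kingman. One instead shows directly that for any two scales $N_1,N_2>K^2/\delta$ one has $|L_{N_1}-L_{N_2}|<K^{-c}$, by applying the LDT at \emph{both} scales, using the almost-invariance of $x\mapsto L_N(x)$ under the shift (for singular cocycles this is where the determinant/{\L}ojasiewicz input actually enters, not as a pointwise lower bound for $L_N(x)$), and absorbing the exponentially small exceptional set against the uniform $L^2$-norm. Once the entire tail $(L_N)_{N>K^2/\delta}$ is shown to oscillate by at most $K^{-c}$, passing $N_2\to\infty$ gives the lemma. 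The missing idea in your sketch is this uniform two-scale comparison; the determinant bound and Kingman do not substitute for it.
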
 

Now we may obtain weak-H\"older continuity in $\omega.$

\begin{mythm}\label{Thm:FreqCont}
Let $A(x)$ be an analytic quasi-periodic $M(2,\C)$-cocycle on $\T^d$ with a plurisubharmonic extension to the strip $|\Im(z_j)| < \rho.$ Suppose that $\omega \in \T^d$ is such that $\norm{k\cdot \omega} > \tau |k|^{-\sigma} > 0$ for all $k\in\Z^d$ with $0 < |k|.$ Then there is $\delta > 0$ and $\gamma = \gamma(d,\sigma) \leq 1,$ such that, for every $\norm{\omega - \omega'} < \delta,$
\begin{equation}
|L(A,\omega) - L(A,\omega')| < C(A) \exp\set{-c(A)\left(-\ln\norm{\omega - \omega'}\right)^{\gamma}}.
\end{equation}
\end{mythm}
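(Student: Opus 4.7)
The plan is to parallel the proof of Theorem~\ref{Thm:CocycleCont}, splitting the difference at a scale $N_0$ via the analogous four-term decomposition
\begin{align*}
|L(A,\omega) - L(A,\omega')|
&\leq |L(A,\omega) + L_{N_0}(A,\omega) - 2L_{2N_0}(A,\omega)| \\
&\quad + |L_{N_0}(A,\omega) - L_{N_0}(A,\omega')| + 2|L_{2N_0}(A,\omega) - L_{2N_0}(A,\omega')| \\
&\quad + |L(A,\omega') + L_{N_0}(A,\omega') - 2L_{2N_0}(A,\omega')|.
\end{align*}
The first term is bounded by Theorem~\ref{Thm:QualIteration} (since $\omega$ is Diophantine) by $C(A)\exp(-C_\rho N_0^\eta)$, and the two cross terms are controlled by Theorem~\ref{Thm:FiniteCont}. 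The essential difficulty --- and the reason Lemma~\ref{Lem:FinalLemma} was recorded --- is the fourth term, in which $\omega'$ is not assumed to satisfy any global Diophantine condition.

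The starting observation is that closeness of $\omega'$ to $\omega$ forces a finite-scale Diophantine condition on $\omega'$: writing $\epsilon = \norm{\omega - \omega'}$, for any $K > 0$ with $K^{\sigma+1}\epsilon < \tau/2$ we have
\[
\norm{k\cdot\omega'} \geq \norm{k\cdot\omega} - |k|\,\epsilon > \tfrac{\tau}{2}|k|^{-\sigma} \quad \text{for all } 0 < |k| \leq K,
\]
so $K$ may be taken as large as $(\tau/(2\epsilon))^{1/(\sigma+1)}$. With this in hand, I apply Corollary~\ref{Cor:BaseStep} to $(A,\omega')$ with $K_0 = K$ and $\delta_0 = \tau K^{-\sigma}/2$: for $N_0$ satisfying the hypotheses (in particular $N_0 \gtrsim K^{\sigma+2}/\tau$) and for $N = 2^j N_0$ in the admissible range, one has $|L_N(A,\omega') + L_{N_0}(A,\omega') - 2L_{2N_0}(A,\omega')| < C(A)\exp(-\rho K^c)$. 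Combining with Lemma~\ref{Lem:FinalLemma}, which provides $|L_N(A,\omega') - L(A,\omega')| < K^{-c}$ once $N > K^2/\delta_0$, at the upper end of the admissible range yields
\[
|L(A,\omega') + L_{N_0}(A,\omega') - 2L_{2N_0}(A,\omega')| \leq C(A)\, K^{-c}.
\]

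To conclude, I optimize. Taking $\alpha = \eta + 1$ and $N_0 = (-\ln\epsilon)^\beta$ with $\beta < 1/\alpha$, exactly as in Theorem~\ref{Thm:CocycleCont}, makes the three $\omega$-side contributions weak-H\"older in $\epsilon$. The $K^{-c}$ bound from the $\omega'$-side is then coupled to $\epsilon$ through the relation $N_0 \gtrsim K^{\sigma+2}/\tau$, which fixes the largest admissible $K$ as a function of $N_0$; the final bookkeeping produces the weak-H\"older exponent $\gamma = \gamma(d,\sigma) \leq 1$ asserted in the statement.

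The main technical obstacle is the three-way scale balance in this optimization: Corollary~\ref{Cor:BaseStep} requires $N_0 \gtrsim K^{\sigma+2}$; Theorem~\ref{Thm:FiniteCont} requires $N_0$ not much larger than $(-\ln\epsilon)^{1/\alpha}$ (so that the $e^{N_0^\alpha}N_0\epsilon$ term stays under control); and we want $K$ as large as possible so that the $\omega'$-side contribution $K^{-c}$ is no worse than the weak-H\"older rates coming from the $\omega$-side. The Diophantine exponent $\sigma$ enters quantitatively through the first of these, and is ultimately responsible for the $\sigma$-dependence of $\gamma$.
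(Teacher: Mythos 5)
Your overall architecture (the four-term split, Theorem \ref{Thm:QualIteration} on the $\omega$ side, Theorem \ref{Thm:FiniteCont} for the two cross terms, Corollary \ref{Cor:BaseStep} plus Lemma \ref{Lem:FinalLemma} on the $\omega'$ side) is the paper's, but your bookkeeping on the $\omega'$ side has a genuine gap: you use a \emph{single} cutoff $K$ both as the $K_0$ in Corollary \ref{Cor:BaseStep} and as the $K$ in Lemma \ref{Lem:FinalLemma}, and you then fix $K$ to be the largest value compatible with the hypothesis $N_0 \gtrsim K^{\sigma+2}/\tau.$ Since the cross terms force $N_0 \sim (-\ln\epsilon)^{\beta}$ with $\beta < 1/\alpha,$ this gives $K \lesssim (-\ln\epsilon)^{\beta/(\sigma+2)},$ so your $\omega'$-side error $C(A)K^{-c}$ is only \emph{polynomially} small in $-\ln\epsilon.$ That is a log-H\"older type bound, not a weak-H\"older one: no choice of $\beta$ in your ``three-way balance'' can make $K^{-c}$ of size $\exp(-c(-\ln\epsilon)^{\gamma})$ while simultaneously keeping $N_0 \gtrsim K^{\sigma+2}$ and $N_0 \lesssim (-\ln\epsilon)^{1/\alpha}.$ The tension you name in your final paragraph is real and, as you have set it up, unresolvable; the ``final bookkeeping'' you defer to cannot produce the claimed exponent.

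The repair, which is what the paper does, is to decouple the two cutoffs. Run Corollary \ref{Cor:BaseStep} for $(A,\omega')$ with the \emph{moderate} $K_0 \sim N_0^{1/(\sigma+2)}$ dictated by $N_0$ itself; its error $\exp(-\rho K_0^{c}) = \exp(-\rho N_0^{c/(\sigma+2)})$ is already stretched-exponentially small in $N_0$ (hence weak-H\"older in $\epsilon$), and it carries the finite-scale telescoping all the way up to the doubly exponential scale $N' = \exp(\exp(C_\rho N_0^{\eta})).$ Only then apply Lemma \ref{Lem:FinalLemma}, at that top scale $N',$ with the \emph{large} cutoff $K' \sim \epsilon^{-1/(1+\sigma)}$ furnished by your own perturbation estimate $\norm{k\cdot\omega'} \geq \norm{k\cdot\omega} - |k|\epsilon.$ The lemma requires only $N' > (K')^{2}/\delta_0 \sim (K')^{2+\sigma} \sim \exp(CN_0^{1/\beta}),$ which the doubly exponential $N'$ comfortably satisfies, and its error $(K')^{-c} \sim \epsilon^{c/(1+\sigma)}$ is even H\"older in $\epsilon$ and hence negligible. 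With the two cutoffs separated, every term is weak-H\"older and the optimization from Theorem \ref{Thm:CocycleCont} closes the argument.
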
 

\begin{Remark}
We note that $\delta$ arises in the form $\norm{\omega - \omega'} < e^{-cN_0^\beta} \sim \delta,$ and thus $\delta$ depends only on $N_0,$ which needs to be sufficiently large to apply Lemma \ref{Lemma:BaseStep}.
\end{Remark}

\begin{proof}
Since, this result involves a fixed cocycle, we will omit it in our notation. We have, for every $N > N_0,$ where $N_0$ is an absolute constant, and every $\alpha > 1,$
\begin{align*}
|L(\omega) - L(\omega')| &\leq |L(\omega') + L_N(\omega') - 2L_{2N}(\omega')|\\
&\quad + |L_N(\omega') + L_N(\omega)| + 2|L_{2N}(\omega') - L_{2N}(\omega)|\\
&\quad + |L(\omega) + L_N(\omega) - 2L_{2N}(\omega)|.
\end{align*}
Now, since $\omega$ satisfies the Diophantine condition $\norm{k \cdot \omega} > \tau|k|^{-\sigma},$ we may appeal to Theorem \ref{Thm:QualIteration} to control the last term. Moreover, the middle two terms may be controlled using Theorem \ref{Thm:FiniteCont}, and we obtain
\begin{align*}
|L(\omega) - L(\omega')| &\leq |L(\omega') + L_N(\omega') - 2L_{2N}(\omega')|\\
&\quad + C(A)e^{-2N^{\alpha - 1}b(A)} + e^{N^\alpha} N\norm{\omega - \omega'}\\
&\quad + C(A)e^{-2N^{\alpha - 1}b(A)} + e^{N^\alpha} N \norm{\omega - \omega'}\\
&\quad + C(A)e^{-C_\rho N^\eta}.
\end{align*}
Since $\omega'$ need not be Diophantine, we are unable to appeal to Theorem \ref{Thm:QualIteration} directly, however, we may use Corollary \ref{Cor:BaseStep} to obtain
$$|L_{N'}(\omega') + L_N(\omega') - 2L_{2N}(\omega')| < e^{-c'N^\eta}$$ for all $N' \leq  \exp(\exp(C_\rho N^\eta)).$

Now, if we assume $\norm{\omega - \omega'} < e^{-N^{1/\beta}},$ then we have 
$$\norm{k\cdot\omega'} > |k|^{-\sigma}$$
for all $0 < |k| \leq e^{N^{1/\beta}/(1 + \sigma)},$ and thus, for $N' = \exp(\exp(C_\rho N^\eta)),$ we may appeal to Lemma \ref{Lem:FinalLemma} to obtain
$$|L_{N'}(\omega') - L(\omega')| < e^{-cN^{1/\beta}}.$$
It follows that
$$|L(\omega') + L_N(\omega') - 2L_{2N}(\omega')| < e^{-cN^{1/\beta}} + e^{-c'N^\eta}.$$
Now we note that the condition on $\norm{\omega - \omega'}$ is equivalent to
$$N < (-\ln\norm{\omega - \omega'})^{\beta},$$
which is analogous to setting $N = (-\ln\norm{\omega - \omega'})^{\beta}$ for some $\beta > 0.$
Taking everything together, we are in precisely the setting from the proof of Theorem \ref{Thm:CocycleCont}, and we may proceed as we did there to conclude.
\end{proof}

Theorem \ref{Thm:MainThm} now follows by combining Theorems \ref{Thm:CocycleCont} and \ref{Thm:FreqCont}.

\section*{Acknowledgement}
We would like to thank W. Liu for useful comments on an earlier version of this work
and S. Jitomirskaya for fruitful discussions. This research was partially supported by NSF DMS-2052572, DMS-2052899, DMS-2155211, and Simons 681675.

\bibliographystyle{abbrv} 
\bibliography{WeakHolder}

\end{document}